\documentclass[aps,prd,nofootinbib,english,10pt]{revtex4-2}
\usepackage[utf8]{inputenc}

\usepackage{graphicx,color} 
\usepackage{xcolor}
\usepackage{rotating}
\usepackage{amssymb}
\usepackage{amsfonts}
\usepackage{amsmath}
\usepackage{amsthm}
\usepackage{xspace}
\usepackage{mathtools} 
\usepackage{verbatim}
\usepackage{comment}
\usepackage{enumitem}
\usepackage{physics}
\usepackage{graphicx}
\usepackage[colorlinks]{hyperref}
\usepackage{MnSymbol}
\usepackage{graphicx}
\usepackage{accents}
\usepackage[mathscr]{euscript}

\newtheorem{theorem}{Theorem}
\newtheorem{definition}{Definition}
\newtheorem{remark}{Remark}
\newtheorem{lemma}{Lemma}
\newtheorem{proposition}{Proposition}
\newtheorem{corollary}{Corollary}

\newcommand{\y}{\dot{x}}

\newcommand{\vd}{\dot{\partial}}
\newcommand{\Dd}{\mathrm{d}}

\begin{document}

\title{Reduction of the Finsler gravity vacuum equation and dynamics for the cosmological Landsberg spacetimes}

\author{Christian Pfeifer}
\email{christian.pfeifer@zarm.uni-bremen.de}
\affiliation{ZARM, University of Bremen, 28359 Bremen, Germany.}

\author{Fidel F. Villaseñor}
\email{fidel.fernandez@uclm.es}
\affiliation{Department of Mathematics, University of Castilla-La Mancha, 13001 Ciudad Real, Spain}

\begin{abstract}
When solving the Einstein vacuum equations, a very helpful feature is that they reduce simply to the vanishing of the Ricci tensor. In Finsler gravity, a promising extension of general relativity that can describe the gravitational field of kinetic gases from a phase space perspective in terms of Finsler geometry, such a reduction is not as straightforward.

In this article, we identify precise conditions under which the scalar Finsler gravity vacuum equation (in either its purely metric or its Palatini formulation) reduces to the vanishing of the Finslerian Ricci curvature. Through analytic arguments, we find that this happens if there exists some power $F^n$ of the Finsler function $F$ that is sufficiently regular and whose associated Finsler metric is non-degenerate on the light cones. Moreover, the Landsberg term in the scalar equation must vanish. This result significantly generalizes the findings of \cite{villasenor:2024}, where a reduction theorem was established under the quite strong  assumption that $F^2$ is regular, which is not satisfied by many examples currently under consideration.

We demonstrate the impact of our findings by applying them to solve the Finsler gravity equations for homogeneous and isotropic Finsler spacetime functions of Landsberg type.
\end{abstract}

\maketitle

%\tableofcontents

%%%%%%%%%%%%%%%%%%%%%%%%%%%%%%%%%%%%%%%%%%%%%%%%%%%%%%%%%%%%%%%%%%%%%%%%%%%%%%%%
%%%%%%%%%%%%%%%%%%%%%%%%%%%%%%%%%%%%%%%%%%%%%%%%%%%%%%%%%%%%%%%%%%%%%%%%%%%%%%%%
\section{Introduction}
To describe physical systems from field theory, it is essential to solve the corresponding field equations, which, generically, are highly non-trivial coupled non-linear partial differential equations. Every simplification of the field equations is welcome in order to be able to find analytic solutions, to improve our understanding and predictions for the behavior of the physical system. 

In gravitational physics, the most famous and most successful field equations that describe the gravitational field of an energy-momentum distribution on spacetime are the Einstein equations. They are non-linear second order partial differential equations which determine the spacetime metric that, in turn, determines the geometry of spacetime. In vacuum, the Einstein equations reduce to the vanishing of the Ricci tensor, which simplifies the search for solutions tremendously.

Even though general relativity is enormously successful, it is clear that, together with the standard model of particle physics, it cannot be the final answer for the description of gravity. This fact becomes very clear by the necessity to introduce dark matter and dark energy to match all observations \cite{Leauthaud:2025azz,Capozziello:2025qmh,Devlin:2024gdl,Bertone:2016nfn} (which becomes even more difficult due to the tensions in the determination of the Hubble parameter $H_0$ \cite{Abdalla:2022yfr,CosmoVerseNetwork:2025alb}). Moreover, from the theoretical side, there is the inevitable prediction of singularities \cite{Penrose1965,HawkingPenrose1970} or infinite gravitational tidal force, or the missing theory of quantum gravity \cite{Addazi:2021xuf,Buoninfante:2024yth}.

For gravity theories beyond general relativity, typically the field equations  are even more complicated than the Einstein ones, and so is the search for solutions. A very promising such theory is Finsler gravity \cite{Beem,Asanov1985,Rutz1993FinslerEinstein,Kouretsis:2008ha,ChangLi2009,java-sanchez:2014,Stavrinos2014,Caponio:2015hca,minguzzi_2016,Javaloyes:2018lex,Lammerzahl:2018lhw,villasenor:2022,Heefer:2024kfi,Sanchez:2025ifx}. It describes the geometry of spacetime in terms of a general length measure for curves \cite{BCS,Bucataru,Sanchez:2025ifx,Hohmann:2021zbt}; physically speaking, a Lagrangian for freely falling point particles, instead of a spacetime metric. 

The Finsler gravitational field equations that are discussed in the literature can be divided into two classes. 1) Tensorial type, formally similar to the Einstein equations and which determine the so-called Finsler metric, usually sourced by some kind of generalized energy momentum tensor \cite{Kouretsis:2008ha,GarciaParrado2022,Bucataru}. 2) Scalar type, for which the fundamental variable is the length measure-defining Finsler function \cite{Rutz1993FinslerEinstein,HPV19} and which can be sourced directly by the $1$-particle distribution function of a kinetic gas \cite{Hohmann:2020yia,Pfeiferetal:2020}. This second category has also been generalized to a Finslerian extension of Palatini gravity, in which the scalar equation is supplemented by a $(1,1)$-tensor field equation determining an independent non-linear connection as another fundamental variable \cite{villasenor:2022}.

Our viewpoint is that there is better physical motivation for the second class of Finsler gravity theories: the fundamental geometry-describing variables are identical to those to be determined dynamically from the field equations. Moreover, these equations are derived from a well-defined action principle, that includes a canonical coupling between Finsler spacetime geometry and kinetic gas matter, which has been shown to be quite successful. Indeed, it offers a possible explanation for the accelerated expansion of the universe without the need for dark energy \cite{Pfeifer_2025}.

The scalar Finsler gravity equations are difficult to solve, as the limited number of known exact solutions demonstrates \cite[Ch. III]{Heefer:2024kfi}. One difficulty lies in the fact that even for the vacuum equations, no reduction to a simple expression, like the vanishing of the Ricci tensor in general relativity, is known. \vspace{10pt}

In this article, we analyze the conditions under which the scalar Finsler gravity vacuum equations reduce to the vanishing of the Finslerian Ricci curvature. The latter is the Finslerian Ricci-flatness condition, the equation that was originally proposed by S. Rutz in~\cite{Rutz1993FinslerEinstein}. It will turn out that this reduction is way less trivial than in the case of the Einstein vacuum equations, but it can be understood in a precise way.\vspace{10pt}

Our main finding will be that the aforementioned reduction
is possible in case that the so-called Landsberg term of the equation vanishes, if there exists a certain power $F^n$ of the Finsler function $F$ that is regular enough on the light cones of the Finsler spacetime.  This means that $F^n$ is differentiable a sufficient number of times (say, smooth) at the light cones and the \emph{$n$-Finsler metric} is non-degenerate there. (Precise definitions and hypotheses will follow in the main body of the paper.)  In the future, these results will help to find explicit solutions of the Finsler gravity equations in a much simpler way than it is possible now. \vspace{10pt}

To demonstrate our claims, this article is organized as follows. In Section \ref{sec:FST}, we recall the mathematical basics about Finsler spacetimes and Finsler gravity in both the metric and the Palatini formulations; this sets the stage for our ongoing analysis. Afterwards, in Section \ref{sec:F^n}, we show how it is possible to express the whole geometry of a Finsler spacetime in terms of $F^n$ instead of $F^2$. Arbitrary powers $n$ of the Finsler function $F$ are considered as they turn out to be useful, so we argue that one should not just stick to the $n=2$ case, as is usually done in the literature on Finsler geometry. The main results on the reduction of the vacuum Finsler gravity equations are then presented in Section \ref{sec:RedVacEq}. Interestingly, we find two conditions under which the reduction is  possible; we discuss them with the necessary theoretical detail. In Section \ref{eq:uniex}, after clarifying the situation of Berwald solutions, we solve the Finsler gravity equation  from its reduced version for the so-called cosmological unicorn, as an explicit example of the scope of our results. Last but not least, we summarize and conclude our  findings in Section \ref{sec:conc}. In addition, there is an Appendix \ref{app:TechDet},  devoted to technicalities  on mathematical analysis and topology.

%%%%%%%%%%%%%%%%%%%%%%%%%%%%%%%%%%%%%%%%%%%%%%%%%%%%%%%%%%%%%%%%%%%%%%%%%%%%%%%%
%%%%%%%%%%%%%%%%%%%%%%%%%%%%%%%%%%%%%%%%%%%%%%%%%%%%%%%%%%%%%%%%%%%%%%%%%%%%%%%%
\section{Finsler spacetimes and Finsler gravity}\label{sec:FST}

In this section, we clarify the mathematical required setup of Finsler spacetimes and the Finsler gravity equation. This lays the foundation on which our main Theorems \ref{thm:maximumprinciplemain} to \ref{thm:analyticity main} about the reduction of the Finsler gravity equation are based in Section \ref{sec:RedVacEq}. 

We present all our findings on a (smooth, connected) 4-dimensional manifold $M$, serving as physical spacetime, and its tangent bundle $TM$. (Extensions to arbitrary dimensions are possible but will not be discussed here.) 

We use the Einstein convention, where contraction of an upper and a lower index means summation over the index pair; Greek indices are $\mu,\nu,\rho,\sigma\in\left\{0,\ldots,3\right\}$ and Latin ones are $i,j,k\in\left\{1,2,3\right\}$. We express elements $(x,\y)\in TM$ by natural coordinates $(x^\mu,\y^\mu)$, writing the tangent vector as $\y=\y^\mu\left.\partial_{x^\mu}\right|_x\in T_xM$. Given a function $f$ on $TM$, we write 
\begin{align}
    \partial_\mu f  := \frac{\partial f}{\partial x^\mu} \,,\quad 
f_{\cdot \mu}:=\vd_{\mu}f:=\frac{\partial f}{\partial\y^\mu}\,
\end{align}
for its derivatives. Therefore, the coordinate basis elements of the tangent and cotangent spaces to $TM$ are denoted by $\{\partial_\mu, \dot\partial_\mu\}$ and $\{dx^\mu, d\dot x^\mu\}$. Throughout the article, Euler's theorem will be in use: when regular enough, a function $f$ is (positively) $n$-homogeneous with respect to $\y$ if and only if $f_{\cdot \mu}\y^\mu=nf$. Analytical properties of functions and topological properties of their domains will be important, but they are delegated to the technical Appendix \ref{app:TechDet}.

\subsection{Finsler spacetimes}
To define a Finsler spacetime, we consider functions $F:\mathcal{T}\rightarrow \mathbb{R}$, where $\mathcal{T}\subseteq TM\setminus\mathbf{0}$ is open and conic, meaning~($(x,\y)\in\mathcal{T}\Rightarrow(x,\lambda\y)\in\mathcal{T}$ $\forall \lambda >0$), and has all of its fibers $\mathcal{T}_x:=\mathcal{T}\cap T_xM$ nonenmpty. Then, we state under which conditions some $F$ qualifies as a Finsler spacetime structure.

\begin{definition} \label{finsler lagrangian def}
    A \emph{Finsler spacetime function} is a smooth function $F\colon \mathcal{T}\rightarrow\mathbb{R}$ with the following properties: 
    \begin{enumerate}
        
        \item It is \emph{$1$-homogeneous}, $F(x,\lambda \y)=\lambda F(x,\y)$ $\forall(x,\y)\in \mathcal{T},\lambda>0$.
        
        \item \label{finsler lagrangian def 3} At each $x\in M$, let the \emph{indicatrix} of $F$ be $\Sigma_x:=\left\{\y\in\mathcal{T}_x\colon F(x,\y)=1\right\}$; on each indicatrix, the (vertical) Hessian 
        \begin{align}
            F_{\cdot \mu\cdot \nu} = \frac{\partial }{\partial\y^\mu}\frac{\partial }{\partial\y^\nu}F\,
        \end{align} 
        is negative definite. In other words, every time that a vector $\xi = \xi^\mu \dot\partial_\mu$ is tangent to $\Sigma_x$, meaning that $F(x,\y)=1$ and $\xi^\mu F_{\cdot \mu}(x,\y)=0$, one has 
        \begin{align}
            F_{\cdot \mu\cdot \nu}(x,\y)\,\xi^\mu \xi^\nu<0.
        \end{align}
        \item \label{finsler lagrangian def 1} Each $\mathcal{T}_x$ is connected and $F>0$ in all of $\mathcal{T}$.

        \item $F$ extends continuously as $0$ to the boundary $\partial\mathcal{T}$ of $\mathcal{T}$. This means that at each $x\in M$, we require
        \begin{align}
            F(x,\dot x) = 0\;\;\; \forall \dot x\in\left(\partial\mathcal{T}\right)_x\subset T_xM \setminus\left\{0\right\}.
        \end{align}
    \end{enumerate}
\end{definition}
Taking $L=F^2$ and focusing on\footnote{It is is the same as $\mathcal{T}$ in \cite[Def. 1]{Hohmann:2021zbt} or \cite[\S 2]{Pfeifer_2025}, while in \cite{villasenor:2022} this set had been called $A$.} 
$\mathcal{T}$, our definition here agrees with the one presented in \cite{Hohmann:2021zbt}. However, it differs from other definitions in the literature 
\cite{java-sanchez:2020,Beem,Caponio:2015hca,Beem,minguzzi_2016,Lammerzahl:2012,java-sanchez:2014}. These different definitions reflect different focuses of applications of Finsler spacetimes by different authors.

Moreover, our definition can equivalently be formulated in terms $L$: for example, the requirement \ref{finsler lagrangian def 3}. above is completely equivalent to the often used condition: 
\begin{equation}
    \forall(x,\y)\in \mathcal{T},\;\left(F^2\right)_{\cdot \mu\cdot \nu}(x,\y)\hbox{ \rm has Lorentzian signature }(+,-,-,-).
    \label{lorentzian signature}
\end{equation}
Yet, we will argue that in general $F^2$ is not special compared to other powers $F^n$ of the Finsler function and that, rather, these other powers can have advantages in determining the geometry of a Finsler spacetime. From this point of view, formulating the definition in terms of $F$ itself becomes more convenient.

We did not specify how regular, i.e.\ how often $F$, or some power of $F$, should be differentiable along the null directions. Since there is no general consensus about this point in the literature, and we need a certain regularity for our reduction results in Section \ref{sec:RedVacEq}, we postpone this discussion to Section \ref{ssec:lightlike regularity}.

Definition \ref{finsler lagrangian def} guarantees the existence of all necessary features, in order to use Finsler spacetimes as a generalization of pseudo-Riemannian (Lorentzian) spacetimes as the stage on which physics takes place:
\begin{enumerate}
    \item ensures the reparameterization invariance of the length integral for curves $\gamma$, 
    \begin{align}
        \mathscr{S}[\gamma] = \int d\tau\ F(\gamma, \dot \gamma)\,,
    \end{align} 
    which has the well-known interpretation of proper time along worldlines in $M$. 

    \item
    guarantees that the conical sets $\mathcal{T}_x$ are convex. They are interpreted as the set of \emph{timelike vectors}, and the sets $\Sigma_x$ are the sets of \emph{unit timelike vectors} at $x\in M$. They can be understood as the Finsler generalizations of the usual hyperboloids that are known from special and general relativity.

    \item states that among the components that might constitute the timelike cones at $x$, one is selected and identified as future pointing; this constitutes a smooth time orientation. It also says that no hypothetical lightlike or spacelike vector has been included in this domain; consequently, the pair $(\mathcal{T},F)$ contains exactly the same information as $\Sigma:=\underset{x\in M}{\bigcup}\Sigma_x$.

    Thus, $\mathcal{T}$ is the bundle of future pointing timelike vectors over $M$. Most importantly, on $\mathcal{T}$, the Finsler gravity action \eqref{eq:gravact} and the resulting vacuum equation \eqref{eq:FinsGravL} (alternatively, \eqref{eq:PalatR} and \eqref{eq:PalatN}) are well defined and we can prove the reduction Theorems~\ref{thm:maximumprinciplemain} to~\ref{thm:analyticity_palatini}.

    \item ensures the existence of non-trivial null (or lightlike) directions that are the boundary of the timelike directions, the so-called light cone. In general, these might not be all null directions that exist. For example, in the case of spacetimes with multiple light cones, which describe the phenomenon of birefringence \cite{Pfeifer:2019wus}, further null directions exist. However, the null directions in $\partial\mathcal{T}$ are the ones that must exist at least, in order to study spacetime physics on the basis of Finsler geometry. 
    
    The set of causal directions is then given by the closure $\overline{\mathcal{T}}=\mathcal{T}\cup\partial\mathcal{T}$ (up to possible further null directions that are only connected to $\partial \mathcal{T}$ through the $\mathbf{0}$ section). Physically speaking these are all future pointing timelike and lightlike directions, the directions along which massive and massless particles propagate. This far, most, if not all, definitions of Finsler spacetimes in the literature agree.
\end{enumerate}
All in all, formally, a \emph{Finsler spacetime} $(M,F)$ is a 4-dimensional manifold $M$ equipped with a Finsler spacetime function in the sense of Definition \ref{finsler lagrangian def}, which ensures the well-definedness of basic causality notions to do physics.

\subsection{Geometric objects on Finsler spacetimes}\label{ssec:geomobj}
To describe the dynamics of Finsler spacetime through action-based Finsler gravity (and Finsler Palatini gravity), we quickly recall the most important classical geometric objects in Finsler geometry \cite{BCS,Bucataru,Hohmann:2021zbt, Heefer:2024kfi}. Historically and conventionally, they are derived from the square of the Finsler function, namely $L=F^2$.

\begin{itemize}
    \item The Finsler metric $g^{(2)}{}_{\mu\nu}$, the Cartan tensor $C_{\mu\nu\rho}$ and its trace $C_\mu$:
    \begin{align}
        g^{(2)}{}_{\mu\nu} = \frac{1}{2}\left(F^2\right)_{\cdot\mu\cdot\nu},\quad 
        C_{\mu\nu\rho} = \frac{1}{4}\left(F^2\right)_{\cdot\mu\cdot\nu\cdot\rho},\quad 
        C_\mu = g^{(2)\nu\rho}C_{\mu\nu\rho}\,.
    \end{align}
    \item The geodesic spray $G^\mu$ and the canonical Cartan non-linear connection coefficients $N^\mu{}_\nu$:
    \begin{align}
        G^\mu = \frac{1}{4}g^{(2)}{}^{\mu\nu}\left(\dot x^\sigma \partial_\sigma \dot \partial_\nu F^2 - \partial_\nu F^2\right),\quad N^\mu{}_\nu = \dot\partial_\nu G^\mu\,.
    \end{align}
    \item The horizontal derivative, the non-linear curvature and the Finsler Ricci curvature, respectively:
    \begin{equation}
        \delta_\mu = \partial_\mu - N^\sigma{}_\mu \dot\partial_\sigma,\quad 
        R^\sigma{}_{\nu\mu}\dot \partial_\sigma = [\delta_\mu,\delta_\nu] \,,\quad
        \mathcal{R} = R^\sigma{}_{\sigma\mu}\dot x^\mu\,.
        \label{eq:def_curvatures}
    \end{equation}
    \item The dynamical covariant derivative operators $\nabla$, acting on $(r,s)$-$d$-tensor fields:
    \begin{align}
        \nabla T^{\mu_1..\mu_r}{}_{\nu_1..\nu_s} = \dot x^\sigma \delta_\sigma T^{\mu_1..\mu_r}{}_{\nu_1..\nu_s} + N^{\mu_1}{}_{\sigma}T^{\sigma..\mu_r}{}_{\nu_1..\nu_s} + ... - N^{\sigma}{}_{\nu_s}T^{\mu_1 ..\mu_r}{}_{\nu_1..\sigma}\,.
    \end{align}
    \item The Landsberg tensor $P_{\mu\nu\rho}$ and its trace $P_\mu$:
    \begin{align}
         P_{\mu\nu\rho} = \nabla C_{\mu\nu\rho},\quad P_\mu = g^{(2)}{}^{\rho\nu}P_{\mu\nu\rho}\,.
    \end{align}
\end{itemize}
For Finsler spacetime functions $F$ that are induced by a pseudo-Riemannian metric with components $g_{\mu\nu}(x)$, namely $F_{\rm pr}(x,\y)=\sqrt{g_{\mu\nu}(x)\,\dot x^\mu \dot x^\nu}$ with signature $(+,-,-,-)$, the geometric objects reduce to the ones known in pseudo-Riemannian geometry and the Finsler objects $C_{\mu\nu\rho}$ and $P_{\mu\nu\rho}$ become identically $0$. This is because the geodesic spray, the Cartan non-linear connection and the non-linear curvature are straightforward generalizations of the Levi-Civita connection and the Riemann curvature to the Finsler setting. In particular, the Finsler Ricci curvature generalizes the classical Ricci \emph{tensor}.

This implies that, the Cartan tensor and the Landsberg tensor measure the deviation of a Finsler spacetime to a pseudo-Riemannian one. In fact, a common classification of Finsler spacetimes (and Finsler spaces) is according to their ``distance'' to pseudo-Riemannian geometry in the following sense: 
\begin{align}\label{FinslerLadder}
    \textrm{pseudo-Riemannian}\subset \textrm{Berwald type}\subset\textrm{Landsberg type}\subset\textrm{weakly Landsberg}\subset\textrm{general Finsler}\,.
\end{align}
Berwald type spacetimes are Finsler spacetimes whose Cartan tensor is non-vanishing and whose Cartan non-linear connection defines an affine connection on spacetime:
\begin{align}\label{eq:BerwGeom}
    C_{\mu\nu\rho}\neq 0,\quad N^\mu{}_\nu(x,\y)= \Gamma^\mu{}_{\nu\sigma}(x)\,\dot x^\sigma\,.
\end{align}
Landsberg Finsler spacetimes and weakly Landsberg ones are those Finsler spacetimes whose Landsberg tensor vanishes, $P_{\mu\nu\rho}=0$, or whose trace of the Landsberg tensor vanishes, $P_\mu=0$, resp. These classes are of interest since for them the Finsler gravity equation simplifies severely, as we will recall next, and the deviation from pseudo-Riemannian is, in principle, not arbitrary large. Landsberg spaces that are not of Berwald type have been found and studied intensively \cite{asanov_unicorns,Bao_unicorns,shen_unicorns,elgendi_2020,Elgendi2021b,Elgendi2021a,Heefer:2023a}. 

The geometric objects we just recalled allow us to construct field equations for Finsler spacetimes that determine the Finsler spacetime function, analogously as how the Einstein equations determine the Lorentzian spacetime metric in general relativity.

\subsection{Finsler gravity}\label{ssec:Fgrav}
To use Finsler spacetime geometry instead of the Riemannian one as an attempt to construct an improved description of gravity is an obvious step that has been and is investigated intensively \cite{Ikeda1981,ChangLi2009,Kouretsis:2008ha,Voicu2010,GarciaParrado2022}. Among the many approaches, one is distinguished: it takes the step from Riemannian to Finsler geometry seriously and uses the Finsler function $F$ as fundamental variable. It is action-based and could prove that it naturally couples matter, modeled as a kinetic gas, to the Finslerian geometry of spacetime \cite{HPV19,Pfeiferetal:2020}. Further, it has been demonstrated that this coupling of Finsler geometry to kinetic gases predicts an accelerated expansion of the universe without the need to introduce dark energy as an additional substance \cite{Pfeifer_2025}.

The starting point of this approach is the following. From now on, let us refer to the Ricci curvature $\mathcal{R}$ of \eqref{eq:def_curvatures} as the \emph{curvature scalar}; then, the standard Finsler gravity action is 
\begin{align}\label{eq:gravact}
   \mathscr{S} := \int_{\Sigma} d\Sigma\ \frac{\mathcal{R}}{F^2}\,.
\end{align}
Recall that $\Sigma$ is the unit-tangent, or indicatrix, bundle
\begin{align}
    \Sigma = \{(x,\y)\in\mathcal{T}\colon\;F(x,\dot x) = 1\}\,;
\end{align}
it has a volume form defined with the help of the Liouville vector field $\mathbb{C} = \y^\mu \vd_\mu$, as
\begin{align}
     d\Sigma = \frac{\det g^{(2)}}{F^2}\ i_\mathcal{{\mathbb{C}}}(d^4x \wedge d^4\dot x)\,.
\end{align}
The action \eqref{eq:gravact} can actually be interpreted in two ways:
\begin{enumerate}
    \item The purely metric approach: As an action for the Finsler function $F$ (or any of its powers $F^n$) alone, $\mathscr{S}=\mathscr{S}[F]$. This means that the curvature scalar is derived from the \emph{canonical} non-linear connection.
    \item The Palatini approach: As an action for the Finsler function $F$ and an independent non-linear connection $\bar N$, namely $\mathscr{S}=\mathscr{S}[F,\bar N]$. This means that, while the volume form is defined in terms of $F$, the curvature scalar is constructed from the \emph{independent} non-linear connection coefficients $\bar N^\mu{}_\nu$. We shall write this scalar as $\mathcal{\bar R}$ (its construction being formally identical to \eqref{eq:def_curvatures}).
\end{enumerate}
In both cases, variation of the action with respect to the independent fields yields equations that determine the Finsler geometry of spacetime dynamically. All details on the derivation and the matter coupling can be found in the articles~\cite{HPV19,Hohmann:2021zbt,villasenor:2022}, while here we focus on the vacuum field equations.
\begin{enumerate}
    \item The purely metric field equation:
    \begin{align}\label{eq:FinsGravL}
        g^{(2)\,\mu\nu}\vd_{\mu}\vd_{\nu}\mathcal{R}-\frac{6}{F^2}\mathcal{R}+2g^{(2)\,\mu\nu}\left(2 \vd_\mu\nabla P_\nu-\nabla(\vd_\mu P_\nu)\right)=0.
    \end{align}
    \item The Palatini field equations:
    \begin{align}
        0 &= g^{(2)\,\mu\nu}\vd_{\mu}\vd_{\nu}\mathcal{\bar R}-\frac{6}{F^2}\mathcal{\bar R} \,,\label{eq:PalatR}\\
        0 &=\left(-2 P_\mu + (6 \frac{\dot x_\nu}{F^2} - 2 C_\nu) J^\nu{}_{\mu} - (\dot\partial_\nu J^\nu{}_\mu+\dot\partial_\mu J^\nu{}_\nu)\right)(\delta^{\mu}{}_\lambda \dot x^\tau - \delta^{\tau}{}_\lambda \dot x^\mu)
        - (\vd_{\rho}J^\tau{}_\lambda - \vd_{\lambda}J^\tau{}_\rho)\,\dot x^\rho \,, \label{eq:PalatN}
    \end{align}
    where $J^\mu{}_\nu = \bar N^\mu{}_{\nu} - N^\mu{}_{\nu}$ is the difference between the connection coefficients of the independent non-linear connection and the canonical Cartan ones.
\end{enumerate}
Both types of field equations possess a part which contains only the curvature tensors. In the metric approach, this part is relevant, for example, for Finsler spacetimes for which $g^{(2)\,\mu\nu}\left(2 \vd_\mu\nabla P_\nu-\nabla(\vd_\mu P_\nu)\right)=0$; in particular, for Berwald, Landsberg and weakly Landsberg geometries. In the Palatini approach, \eqref{eq:PalatR} needs always to be solved.

In the following we will show under which conditions the equation
\begin{align}
    g^{(2)\,\mu\nu}\vd_{\mu}\vd_{\nu}\mathcal{\bar R}-\frac{6}{F^2}\mathcal{\bar R} &= 0
\end{align}
reduces to 
\begin{align}
    \mathcal{\bar R} = 0\,,
\end{align}
for both the metric ($\mathcal{\bar R} = \mathcal{ R}$) and the Palatini cases.

%%%%%%%%%%%%%%%%%%%%%%%%%%%%%%%%%%%%%%%%%%%%%%%%%%%%%%%%%%%%%%%%%%%%%%%%%%%%%%%%
%%%%%%%%%%%%%%%%%%%%%%%%%%%%%%%%%%%%%%%%%%%%%%%%%%%%%%%%%%%%%%%%%%%%%%%%%%%%%%%%
\section{\texorpdfstring{$F^n$}{Fn} instead of \texorpdfstring{$F^2$}{F2}}\label{sec:F^n}
The conventional approach to Finsler geometry and Finsler gravity, i.e. to derive the geometric objects from $L=F^2$ as discussed in Section \ref{ssec:geomobj}, distinguishes the second power of the Finsler function $F$ among all other powers. The (mostly historical) reason for this is that it guarantees that:
\begin{itemize}
    \item $L$ is smooth in pseudo-Riemannian geometry, for which  $L =g_{\mu\nu}(x)\,\dot x^\mu \dot x^\nu$.
    \item The Finsler metric $g^{(2)}$ is $0$-homogeneous.
\end{itemize}
However, when one considers general Finsler spacetime geometry, the first property, that $L$ is smooth \emph{everywhere}, generically does not hold anymore.

In the following, we do not want to single out any specific power of $F$, but rather consider arbitrary powers $F^n$, an idea that appeared already in the articles \cite{Pfeifer:2011xi,Pfeifer:2011tk} but was not explored in subsequent works. In particular, we will demonstrate that when there exists some power $n>1$ of $F$ such that $F^n$ is smooth and the $\dot x$-Hessian of $F^n$ is non-degenerate, the Finsler gravity vacuum equations \eqref{eq:FinsGravL} and \eqref{eq:PalatR} simplify considerably. 

This means that the choice of working with $F^2$ instead of any other power of $F$ is  nothing more than a choice. The information about the geometry is contained in $F$ and one can work with any power of $F$, as it is convenient for the problem at hand.

To understand the Finsler gravity equation in terms of $F^n$, we convert the usual Finsler geometric objects derived from $F^2$ into how they can be expressed in terms of $F^n$.

\subsection{Finsler geometry from \texorpdfstring{$F^n$}{dd}}
For the rest of the article, $n\in\mathbb{R}$ will be a real number different from $0$ and $1$. Several of the relations between the $F^2$ and $F^n$ objects have been derived in \cite{Pfeifer:2011tk,Pfeifer:2011xi} in a slightly different language. We adapt the results here to the language of the present paper.

We begin by introducing the $n$-metric components, which are homogeneous of degree $n-2$ with respect to $\dot x$, by writing
\begin{align}
    g^{(n)}_{\quad \mu\nu}:=\frac{1}{2}\left(F^n\right)_{\cdot\mu\cdot\nu}\,.
\end{align}
It is related to the usual Finsler metric components and their inverse through 
\begin{align}
    g^{(2)}_{\quad \mu\nu}&=\frac{2}{n}F^{2-n}g^{(n)}_{\quad \mu\nu}+\left(2-n\right)F_{\cdot \mu} F_{\cdot \nu}\,,\label{rel metrics}\\
    g^{(2)\,\mu\nu}&=\frac{n}{2}F^{n-2}g^{(n)\,\mu\nu}+\frac{n-2}{n-1}F^{-2}\,\y^\mu\y^\nu\,,
    \label{rel inverse metrics}\\
    \det g^{(2)}{}_{\mu\nu}&=\frac{16}{n^4\left(n-1\right)} F^{8-4n}\det g^{(n)}{}_{\mu\nu}
    \label{rel determinants}\,.
\end{align}
These identities imply the following relation (that is useful to rewrite the Finsler gravity in \eqref{eq:u2}) expressing $g^{(2)\,\mu\nu}$ in $\delta^\mu_\sigma = g^{(2)\,\mu\nu}g^{(2)}_{\quad \sigma\nu}$ through $g^{(n)\,\mu\nu}$:
\begin{align}\label{eq:gng2}
     g^{(n)\,\mu\nu}g^{(2)}_{\quad \sigma\nu} = \left(\delta^\mu_\sigma - \frac{n-2}{n-1}\frac{\dot x^\mu \dot x_\sigma}{F^2} \right) \frac{2}{n}F^{2-n}\,.
\end{align}
For Finsler spacetime functions, we have that $g^{(2)}$ is non-degenerate in $\mathcal{T}$ (and therefore $g^{(n)}$ is non-degenerate in the timelike bundle too). But when one wants to study the geometry beyond $\mathcal{T}$, extending $F$ by zero to its boundary, there exist many examples, such as $m$-th rooth metrics with $m>2$
\begin{align}\label{eq:mth}
    F = (G_{\mu_1\ldots \mu_m}(x)\,\dot x^{\mu_1}\ldots\dot x^{\mu_m})^{\frac{1}{m}}\,,
\end{align}
for which $g^{(2)}$ does not exist, say, where $F=0$. However, note that for these, $g^{(m)}$ is perfectly well defined at the null directions.

Further, we define the $n$-Cartan tensor as
\begin{align}
    C^{(n)}{}_{\mu\nu\rho} := \frac{1}{2} g^{(n)}_{\quad \mu\nu \cdot\rho}
    =\frac{1}{4}\left(F^n\right)_{\cdot\mu\cdot\nu\cdot \rho}\,.
\end{align}
As for the metric, we can express the conventional Cartan tensor $C_{\mu\nu\rho} = C^{(2)}{}_{\mu\nu\rho}$ as
\begin{align}
    C^{(2)}{}_{\mu\nu\rho} 
    = \frac{2}{n} F^{2-n}C^{(n)}{}_{\mu\nu\rho}
    + (2-n)(F_{\cdot \mu} F_{\cdot \nu \cdot \rho} + F_{\cdot \nu} F_{\cdot \mu \cdot \rho} + F_{\cdot\rho} F_{\cdot \mu \cdot \nu}) + (n-1)(2-n) F^{-1} F_{\cdot \mu} F_{\cdot \nu} F_{\cdot \rho}\,.
\end{align}

For the geodesic spray, remarkably, no distinguished power $F^n$ (with $n\neq 0,1$) exists. It can equally well be derived from any power, as the following holds:
\begin{align}
    G^\mu 
    &= \frac{1}{4}g^{(2)}{}^{\mu\nu}\left(\dot x^\sigma \partial_\sigma \dot \partial_\nu F^2 - \partial_\nu F^2\right)\\
    &= \frac{1}{4}g^{(n)}{}^{\mu\nu}\left(\dot x^\sigma \partial_\sigma \dot \partial_\nu F^n - \partial_\nu F^n\right) \label{eq:spray_Fn}\\
    &=: G^{(n)\mu}\,.
\end{align}
The proof of this equality just uses Euler's Theorem for homogeneous functions and the relation \eqref{rel inverse metrics}, see for example~\cite{Pfeifer:2011tk,Pfeifer:2011xi}. This invariance property immediately propagates to the geometric objects that are derived solely from the derivatives of the geodesic spray, namely the canonical non-linear connection, the non-linear curvature tensor and the curvature scalar. Moreover, it can be shown that the Landsberg tensor with one index raised also is independent of the power $n$ of the Finsler function that one employs for its derivation:
\begin{align}
    N^\mu{}_{\nu} = N^{(n)\mu}{}_{\nu},\quad R^\mu{}_{\nu\rho} = R^{(n)\mu}{}_{\nu\rho}, \quad \mathcal{R} = \mathcal{R}^{(n)},\quad P^{\mu}{}_{\nu\rho} = P^{(n)\mu}{}_{\nu\rho}\,.
    \label{eq:curvatures_Fn}
\end{align}

This remarkable observation, that many geometric objects in Finsler geometry, and thus Finsler gravity, are independent of whether they are derived from $F^2$ or $F^n$ allows us to rewrite the Finsler gravity equation easily in terms of the latter.

\subsection{Finsler gravity in terms of \texorpdfstring{$F^n$}{dd}}
The canonical action-based Finsler gravity equation \eqref{eq:FinsGravL} can be compactly rewritten as
\begin{align}
    0
    & =  g^{(2)\,\mu\nu}\vd_{\mu}\vd_{\nu}\mathcal{R}-\frac{6}{F^2}\mathcal{R}+2g^{(2)\,\mu\nu}\mathcal{P}_{\mu\nu}\,
    \label{eq:FGravVac_0}
\end{align}
when we define
\begin{align}
   \mathcal{P}_{\mu\nu}
     & :=  2 \vd_\mu\nabla P_\nu-\nabla(\vd_\mu P_\nu).
\end{align}
The terms $\mathcal{R}$ and $\mathcal{\mathcal{P}_{\mu\nu}}$ do not change when changing the variable from $F^2$ to $F^n$. Expressing the appearing $g^{(2)}$ in terms of $g^{(n)}$ yields
\begin{align}\label{eq:FGravVac}
    0 = \left(\frac{n}{2}F^{n-2}g^{(n)\,\mu\nu}+\frac{n-2}{n-1}F^{-2}\y^\mu\y^\nu\right)\vd_{\mu}\vd_{\nu}\mathcal{R} 
    - \frac{6}{F^2}\mathcal{R}
    +2\left(\frac{n}{2}F^{n-2}g^{(n)\,\mu\nu}+\frac{n-2}{n-1}F^{-2}\y^\mu\y^\nu\right)\mathcal{P}_{\mu\nu}\,.
\end{align}
Using that $\mathcal{R}$ is $2$-homogeneous with respect to $\dot x$ and that $\dot x^\mu \dot x^\mu \mathcal{P}_{\mu\nu} = 0$ allows us to rewrite the equations as
\begin{align}\label{eq:FullFinslern}
    0 = \frac{n}{2}g^{(n)\,\mu\nu}\vd_{\mu}\vd_{\nu}\mathcal{R} 
    + \frac{2(1-2n)}{n-1} \frac{\mathcal{R}}{F^n} 
    + ng^{(n)\,\mu\nu}\mathcal{P}_{\mu\nu}\,.
\end{align}

Similarly, the Finsler Palatini equation \eqref{eq:PalatR} can be expressed  as
\begin{align}\label{eq:Fgrav2}
    0 = \frac{n}{2}g^{(n)\,\mu\nu}\vd_{\mu}\vd_{\nu}\mathcal{\bar R} 
    + \frac{2(1-2n)}{n-1} \frac{\mathcal{\bar  R}}{F^n}\,.
\end{align}

The question to be answered next is: Under which conditions do these equations reduce to $\mathcal{R} = 0$ or $\mathcal{\bar R} = 0$?

%%%%%%%%%%%%%%%%%%%%%%%%%%%%%%%%%%%%%%%%%%%%%%%%%%%%%%%%%%%%%%%%%%%%%%%%%%%%%%%%
%%%%%%%%%%%%%%%%%%%%%%%%%%%%%%%%%%%%%%%%%%%%%%%%%%%%%%%%%%%%%%%%%%%%%%%%%%%%%%%%
\section{Reducing the vacuum field equations}\label{sec:RedVacEq}
We consider Finsler gravity in either the metric or the Palatini formalism. \vspace{10pt}

In the metric formalism, we restrict ourselves to Finsler spacetimes with $g^{(n)\mu\nu}\mathcal{P}_{\mu\nu} = 0$ for some $n$ (which then implies it for any $n$), a condition satisfied, for example, by Landsberg, weakly Landsberg and Berwald spaces, as for them $P_\mu = 0$, see the discussion at the end of Section \ref{ssec:geomobj}. Then, the metric Finsler gravity equation \eqref{eq:FullFinslern} reduces to
\begin{align}\label{eq:Fgrav1}
    0 = \frac{n}{2}g^{(n)\,\mu\nu}\vd_{\mu}\vd_{\nu}\mathcal{R} 
    + \frac{2(1-2n)}{n-1} \frac{\mathcal{R}}{F^n}\,.
\end{align}
In the Palatini formalism, one of the field equations has a similar form, namely \eqref{eq:Fgrav2} needs to hold.

Generically, $F$ and $F^2$ are not smooth on the light cones, i.e., on the boundary $\partial\mathcal{T}$ of the timelike cones $\mathcal{T}$, so neither the Finsler metric nor other derivatives are guaranteed to exist there. Yet, there are solutions of the Finsler gravity equations with a well defined geometry along the lightlike directions (as an example will show later) because the geodesic spray, the non-linear connection and the curvature are well defined there in terms of $F^n$.

We want to prove next that if one of the following lightlike regularity conditions on $F$ is satisfied, to be discussed in Section \ref{ssec:lightlike regularity}, the equations \eqref{eq:Fgrav1} and  \eqref{eq:Fgrav2} simplify to $\mathcal{ R} = 0$ and $\mathcal{\bar R} = 0$, respectively.

To do so, we first state conditions on $F^n$ that have to hold on the light cones, in order to then prove the statement with two different techniques in Sections \ref{ssec:maximumprinciple} and \ref{ssec:divbyzero}.

\subsection{Lightlike regularity conditions}\label{ssec:lightlike regularity}
Among the literature on Finsler spacetimes, interestingly, there is no consensus on how regular $F$ or $F^2$, or any power of $F$, needs to be on $\partial\mathcal{T}$, the set of lighlike directions, where $F(x,\dot x) = 0$. Depending on the application in mind, different authors require different degrees of differentiability for $F$ on the light cones and, by doing so, exclude or include certain examples from their definitions, e.g. compare \cite[Defs. 3.1 and 3.5]{Javaloyes:2018lex} with \cite[App. A]{Javaloyes:2018lex} or \cite{Pfeifer:2011xi,Beem,minguzzi_2016,Lammerzahl:2012,Hohmann:2021zbt}. The different conditions that one may impose on (the extension of) some power of $F$ at parts of $\partial \mathcal{T}$, we refer to as \emph{lightlike regularity conditions}. 

For our goal, to ensure that the Finsler vacuum equations of the type \eqref{eq:Fgrav1} and \eqref{eq:Fgrav2} can be simplified (leading to generalizations of the results from \cite{villasenor:2022}), one of the following lightlike regularity conditions needs to hold:
\begin{itemize}
    \item[{\bf (a)}] There exists a real number $n\geq2$ such that $F^n$ is at least eight times differentiable ($C^8$) and $g^{(n)}{}_{\mu\nu}$ is non-degenerate.
    \item[{\bf (b)}] There exist a positive number $n\neq 1$ such that $\frac{3+\sqrt{
    9-\frac{16}{n}+\frac{8}{n^2}}}{2}\notin\mathbb{N}$ and some subset $B\subseteq\partial \mathcal{T}$ such that the function $F^n$ is smooth on $\mathcal{T}\cup B$ and $\left.g^{(n)}_{\quad \mu\nu}\right|_{B}$ is non-degenerate. The rather technical square root condition includes an easily memorizable case, namely: $n>1$ is sufficient already.
\end{itemize}

The first condition ensures that $F^n$ is as often differentiable and that $g^{(n)}$ and its inverse exist on the light cones as it is needed to execute the computations for the proof of the reduction of the field equations. This is two times more differetiable as it would be necessary to just write down the field equations. 

The second condition is surprising, in that it basically requires that $F^n$ is smooth ($C^\infty$) and $g^{(n)}$ and its inverse exist \emph{just} on a subset $B$ of each light cone that can be as small as just one lightlike direction. 

Comparing the two conditions reveals that the first requires less regularity ($C^8$ instead of $C^\infty$) but on a larger set (the whole of $\partial \mathcal{T}$ instead of only on $B$), while the second requires stronger regularity, but on a smaller set. If either {\bf (a)} (Section \ref{ssec:maximumprinciple}) or {\bf (b)} (Section \ref{ssec:divbyzero}) hold, we will prove that the Finsler gravity equation \eqref{eq:Fgrav1} reduces to $\mathcal{R} = 0$ and \eqref{eq:Fgrav2} reduces to $\mathcal{\bar R} = 0$.

As the conditions {\bf (a)} and {\bf (b)} are rather technically precise, one can retain the following simple case. The reduction of the field equations is true if there exists an $n \geq 2$ with $F^n$ smooth and $g^{(n)}$ non-degenerate on the light cones $\partial\mathcal{T}$.

\subsection{Reducing the vacuum equations from the elliptic maximum principle} \label{ssec:maximumprinciple}
In \cite[Th. 5.18]{villasenor:2022} it was proven that the equations
\begin{align}\label{eq:Fgrav3}
    0 = g^{(2)\,\mu\nu}\vd_{\mu}\vd_{\nu}\mathcal{ R} 
    -6 \frac{\mathcal{  R}}{F^2}\,,\quad
    0 = g^{(2)\,\mu\nu}\vd_{\mu}\vd_{\nu}\mathcal{\bar R} 
    -6 \frac{\mathcal{\bar  R}}{F^2}\,
\end{align}
are equivalent to $\mathcal{  R}=0$ or $\mathcal{\bar  R}=0$ under some assumptions on $F^2$.  The strong conditions imposed to derive the result were that $F^2$ is more regular than just continuous on $\partial \mathcal{T}$, where $F=0$. Moreover, it was required that $g^{(2)}{}_{\mu\nu}$ is non-degenerate there. Very often, this is not the case for many examples of Finsler spacetimes, as for example the $m$-th root metrics demonstrate \eqref{eq:mth}.

We saw in \eqref{eq:Fgrav1} and \eqref{eq:Fgrav2} that one can reformulate equations of the type \eqref{eq:Fgrav3} in terms of $F^n$ and $g^{(n)}{}_{\mu\nu}$, which for many examples of Finsler spacetimes have improved regularity and non-degeneracy properties. Hence, we seek to generalize the statement to the case in which perhaps $F^2$ does not have good regularity properties at the directions where $F=0$ but $F^n$ does for \emph{some} choice of $n$. We will achieve this in the Theorems \ref{thm:maximumprinciplemain} and \ref{thm:main extension 1} as applications of the classical\footnote{We always refer to the maximum principle proved by Eberhard Hopf in 1927 \cite{Hopf1927}, of which we provide a statement for the reader's convenience. Its setting is that of a partial differential equation (PDE, from now on) $a^{ij}(z)\,\partial^2_{z^i z^j} u+b^k(z)\,\partial_{z^k}u+c(z)\,u=0$, where the scalar unknown $u(z^1,\ldots,z^m)$ is of class $C^2$ on an open domain $\Omega\subseteq\mathbb{R}^m$. Locally in $\Omega$, the symmetric matrix $a^{ij}$ must be \emph{uniformly elliptic}: denoting by $\lambda(z)$ and $\Lambda(z)$, respectively, its smallest and largest eigenvalues, $\lambda>0$ everywhere and $\Lambda/\lambda\leq const$. One also requires $\left|b^k\right|/\lambda$ and $c/\lambda$ to be bounded (but, again, only locally). Then, if $c\leq 0$ and $u$ attains a non-negative absolute maximum in $\Omega$, it is in fact the case that $u$ is constant. \label{foot:max_prin_summary}} strong maximum principle \cite[Th. 3.5]{gilbarg-trudinger}. We discuss all the details for the metric Finsler case first, \eqref{eq:Fgrav1}, before we discuss how the results carry over to the Palatini case \eqref{eq:Fgrav2}.

The key insight from \cite{villasenor:2024} is to define the variable  
\begin{equation}
    u:=\frac{\mathcal{R}}{F^2},
    \label{u}
\end{equation}
which is $0$-homogeneous, and to rewrite the equations of interest as an equation for $u$. 

\subsubsection{The metric Finsler gravity equation}
Rewriting \eqref{eq:Fgrav1} in terms of $u$, we find that for any $n$, except $0$ and $1$, the metric gravity equation can be cast as follows, using for example \eqref{eq:gng2},
\begin{align}
    0 
    &= \frac{n}{2}F^{n-2} g^{(n)\,\mu\nu}\vd_{\mu}\vd_{\nu}\mathcal{R} 
    + \frac{2(1-2n)}{n-1} u \label{eq:u}\\
    &= \frac{n}{2}F^{n} g^{(n)\,\mu\nu}\vd_{\mu}\vd_{\nu}u + 2 u\,,\label{eq:u2}
\end{align}
which we interpret as a partial differential equation determining the $\dot x$-dependence of $u$. The interesting observation here is that the first term in \eqref{eq:u} depends on $F^n$, while in the second, $u$ is defined from $F^2$. This allows us to draw conclusions about the regularity of $u$ from the regularity properties of $F^n$.

We stress that on $\mathcal{T}$, the above equations \eqref{eq:u} and \eqref{eq:u2} are equivalent representations of the same equation for any $n\in\mathbb{R}\setminus\left\{0,1\right\}$, and we will use specific choices of $n$ to draw our conclusions.

To make this precise, we formulate the following mathematical lemma:
\begin{lemma} 
\label{continuous extension}
    Assume that $F$ solves the metric Finslerian vacuum equation on $\mathcal{T}$ with $g^{(n)\mu\nu}\mathcal{P}_{\mu\nu}=0$, which means that \eqref{eq:u} holds on the set of timelike vectors. Then, we can make the following statements.
    \begin{enumerate}[label=\roman*)]
        \item \label{continuous extension 1} Assume that there exists a constant $n\in\mathbb{R}$ with $n\geq 2$ such that the function $F^n$ is at least six times differentiable $(C^6)$ on $\overline{\mathcal{T}}=\mathcal{T} \cup \partial\mathcal{T}\subseteq TM\setminus\mathbf{0}$ 
        and $g^{(n)}_{\quad\mu\nu}$ is non-degenerate on all of the light cones $\partial\mathcal{T}$. Then, $u$ in \eqref{u}, \eqref{eq:u} and \eqref{eq:u2} must be continuous on the timelike and null directions $\overline{\mathcal{T}_x}  \subseteq T_xM\setminus \left\{0\right\}$:
        \begin{align}
            u\in C^0(\overline{\mathcal{T}_x})\,.  
        \end{align}
        Consequently, $u$ attains its absolute maximum and absolute minimum on $\overline{\mathcal{T}_x}$.
        
        \item \label{continuous extension 2} Assume that there exists a constant $n\in\mathbb{R}$ with $n\geq 2$ for all $x\in M$ such that $F^n$ is eight times differentiable $(C^8)$ and $g^{(n)}_{\quad\mu\nu}$ is non-degenerate (i.e., that condition {\bf (a)} in Section \ref{ssec:lightlike regularity} holds). Then, $u$ is continuous and must be $0$ on all the light cone directions:
        \begin{equation}
            \left.u\right|_{\partial(\mathcal{T}_x)}=0\,.
            \label{u at boundary}
        \end{equation}
    \end{enumerate}
    The properties obtained hold for $u$ at each individual $x\in M$.
\end{lemma}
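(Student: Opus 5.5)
The plan is to use the fact, recalled in \eqref{eq:spray_Fn} and \eqref{eq:curvatures_Fn}, that the spray, the non-linear connection and the curvature scalar can all be computed from $F^n$ instead of $F^2$. Then \eqref{eq:u} expresses $u=\mathcal{R}/F^2$ purely through objects built from $F^n$ and $g^{(n)}$. The proof then reduces to counting derivatives and taking limits towards $\partial\mathcal{T}$.

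\textbf{Step 1: regularity of $\mathcal{R}$ on $\overline{\mathcal{T}}$.} Since $g^{(n)}{}_{\mu\nu}$ is non-degenerate on $\overline{\mathcal{T}}$, its inverse $g^{(n)\mu\nu}$ is as regular as $g^{(n)}{}_{\mu\nu}$ there, by Cramer's rule: the determinant is $C^{k-2}$ and non-vanishing whenever $F^n\in C^k$. In \eqref{eq:spray_Fn}, $G^\mu$ involves at most second derivatives of $F^n$, so $G^\mu\in C^{k-2}(\overline{\mathcal{T}})$. Next, $N^\mu{}_\nu=\vd_\nu G^\mu$, and $R^\sigma{}_{\nu\mu}$ involves first derivatives of $N$. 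Hence $\mathcal{R}\in C^{k-4}(\overline{\mathcal{T}})$.

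\textbf{Step 2: part \ref{continuous extension 1}.} Take $k=6$, so $\mathcal{R}\in C^2$ and $g^{(n)\mu\nu}\vd_\mu\vd_\nu\mathcal{R}$ is continuous on $\overline{\mathcal{T}}$. Since $n\geq 2$, the coefficient $2(1-2n)/(n-1)$ is non-zero, and \eqref{eq:u} can be solved for $u$:
\begin{equation}
u=\frac{n(n-1)}{4(2n-1)}\,F^{n-2}\,g^{(n)\mu\nu}\vd_\mu\vd_\nu\mathcal{R}\quad\text{on }\mathcal{T}.
\end{equation}
For $n\geq2$, $F^{n-2}=(F^n)^{(n-2)/n}$ extends continuously to $\overline{\mathcal{T}}$: it equals $1$ if $n=2$ and vanishes on $\partial\mathcal{T}$ if $n>2$. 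The right-hand side is therefore continuous on $\overline{\mathcal{T}_x}$, which gives the continuous extension of $u$.

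For the extrema, use that $u$ is $0$-homogeneous. Its range on $\overline{\mathcal{T}_x}$ therefore equals its range on $\overline{\mathcal{T}_x}\cap S^3$, where $S^3$ is a Euclidean unit sphere in $T_xM$. This set is closed and bounded, and it avoids the origin, so it is compact. By Weierstrass, $u$ attains its maximum and minimum there, and hence on $\overline{\mathcal{T}_x}$.

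\textbf{Step 3: part \ref{continuous extension 2}.} Split into the cases $n>2$ and $n=2$.
\begin{itemize}
\item For $n>2$, the displayed formula already gives $u\to0$ on $\partial\mathcal{T}$, because $F^{n-2}\to0$ while the other factor stays bounded.
\item For $n=2$ the prefactor is $1$, so a second argument is needed. With $k=8$, Step 1 gives $\mathcal{R}\in C^4$, and the formula gives $u\in C^2(\overline{\mathcal{T}})$. Then use \eqref{eq:u2}: on $\mathcal{T}$ we have $2u=-\frac{n}{2}F^n g^{(n)\mu\nu}\vd_\mu\vd_\nu u$. The factor $g^{(n)\mu\nu}\vd_\mu\vd_\nu u$ is continuous, hence locally bounded, up to $\partial\mathcal{T}$, while $F^n\to0$ there. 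Letting $\dot x$ tend to a boundary point gives $u|_{\partial(\mathcal{T}_x)}=0$.
\end{itemize}
This second argument works for any $n\geq2$ for which $F^{n-2}$ is $C^2$, but the case split avoids worrying about the regularity of $F^{n-2}$ near $F=0$.

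The main obstacle is the bookkeeping in Step 1. One must check carefully that the curvature scalar computed from the $F^n$-spray really coincides with $\mathcal{R}$ and extends to $\partial\mathcal{T}$ with exactly four derivatives of $F^n$ lost. One must also check that the inverse $n$-metric stays continuous up to the boundary, which is where non-degeneracy on all of $\partial\mathcal{T}$ is essential. The limiting arguments themselves are straightforward once this regularity is secured.
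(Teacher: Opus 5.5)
Your proposal is correct and follows essentially the same route as the paper. You solve \eqref{eq:u} for $u$ and count six derivatives of $F^n$, with $F^{n-2}$ continuous for $n\geq 2$, to get continuity, and you get the extrema from $0$-homogeneity plus compactness of $\overline{\mathcal{T}_x}$ intersected with a sphere. You then split into $n>2$, where $F^{n-2}\to 0$ kills the right-hand side, and $n=2$, where $C^8$ regularity makes $u$ of class $C^2$ so that \eqref{eq:u2} forces $u=0$ on $\partial(\mathcal{T}_x)$. Your explicit tracking of derivative losses through $G^\mu$, $N^\mu{}_\nu$ and $\mathcal{R}$ spells out what the paper asserts in a single sentence.
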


\begin{proof}\,
    \begin{enumerate} [label=\roman*)]
        \item On the right-hand side of \eqref{eq:u}, the term $g^{(n)\,\mu\nu}$ extends continuously to $\overline{\mathcal{T}_x}\subseteq\overline{\mathcal{T}}\cap T_xM$ by the assumption that $F^n$ is $C^6$ and $g^{(n)}{}_{\mu\nu}$ is non-degenerate there. The same assumptions guarantee the existence and continuity of $\dot\partial_\mu\dot\partial_\nu\mathcal{R}$, as this term contains at most 6th derivatives of $F^n$.

        Moreover, since we assumed that $F^n$ is $C^6$, also $F^{n-2}$ is at least continuous on $\overline{\mathcal{T}_x}$, as we imposed that $n-2\geq 0$. Thus, from our assumptions it follows that the complete term 
        \begin{align}\label{eq:prfL1}
            \frac{n}{2}F^{n-2} g^{(n)\,\mu\nu}\vd_{\mu}\vd_{\nu}\mathcal{R}
        \end{align}
        is continuous on $\overline{\mathcal{T}_x}$, and as it is equated to $u$ in \eqref{eq:u}, so is $u$.
        
        Having ensured that $u$ is continuous, and by its $0$-homogeneity w.r.t. $\dot x$, one can apply Proposition \ref{properties proj} of the Appendix \ref{app:TechDet} and thus $u$ attains its absolute maximum and absolute minimum among the directions in $\overline{\mathcal{T}_x}$.

        \item Here, we distinguish two cases, the first when $n>2$, and the second when $n=2$. 
        
        For $n>2$, the assumptions that $F^n$ is $C^8$ and that $g^{(n)}{}_{\mu\nu}$ is non-degenerate not only guarantee that the term
        \begin{align}
            \frac{n}{2}F^{n-2} g^{(n)\,\mu\nu}\vd_{\mu}\vd_{\nu}\mathcal{R}
        \end{align}
        is well defined on $\partial(\mathcal{T}_x)$, but also that it vanishes there, as $F^{n-2}=0$ on $\partial(\mathcal{T}_x)$. Thus, by \eqref{eq:u}, also $u$ must vanish as claimed. For this part of the proof, it would be sufficient if $F^n$ is only $C^6$.

        For\footnote{This case is covered in the proof of \cite[Lem. 2.6]{villasenor:2024}, at the end of case (A).} $n=2$, we find from \eqref{eq:u} that
        \begin{align}
             u=\frac{1}{6} g^{(2)\,\mu\nu}\vd_{\mu}\vd_{\nu}\mathcal{R}\,.
        \end{align}
        Since we now assumed that $F^2$ is $C^8$ on $\partial\mathcal{T}$, it follows that $u$  is $C^2$ on $\partial(\mathcal{T}_x)$ and thus the equation \eqref{eq:u2}, namely
        \begin{align}
            0 = F^2 g^{(2)\,\mu\nu}\vd_{\mu}\vd_{\nu}u + 2 u,
        \end{align}
        implies that $u = 0$ on $\partial(\mathcal{T}_x)$.
    \end{enumerate}
    These properties hold at each $x\in M$, which is a consequence of the assumption that the distinguished $n$ is the same constant for all $x\in M$.
\end{proof}
Thus we established that if $F^n$ is regular enough (at least $C^8$, but in particular when $F^n$ is smooth) on the light cones, and the corresponding $n$-metric is non-degenerate, then the variable $u$ must vanish on the light cones for solutions of the Finsler gravity equation with $g^{(n)\mu\nu}\mathcal{P}_{\mu\nu}=0$. More precisely, we can conclude that
\begin{align}
   \mathcal{R} = F^2 {u}\,
\end{align}
with $u$ and $F^2$ being at least continuous\footnote{ In Section \ref{ssec:divbyzero}, where we will use the regularity assumption~{\textbf {(b)}} of \ref{ssec:lightlike regularity}, in Lemma \ref{recursion lemma} we will obtain an analogous result. However, there, due to the different regularity assumptions, the statement allows us to even conclude that $\mathcal{R}$ can be decomposed into smooth factors.} on each light cone. 

These insights about the behavior of solutions of the Finsler gravity equation on light cones allow us to conclude that, under these regularity conditions, the equation reduces to $\mathcal{R}=0$ on the timelike and lightlike directions $\overline{\mathcal{T}}$. 

\begin{theorem} \label{thm:maximumprinciplemain}
Let $F$ be such that there exists a constant $n\geq2$ for all $ x\in M$ for which $F^n$ is at least $C^8$ and  $g^{(n)}{}_{\mu\nu}$ is non-degenerate on the light cones (lightlike regularity assumption {\textbf {(a)}} in Section \ref{ssec:lightlike regularity}). Moreover, assume that $g^{(n)\mu\nu}\mathcal{P}_{\mu\nu}=0$. Then, the metric Finsler gravity vacuum equation 
\begin{align}
        0 = g^{(2)\,\mu\nu}\vd_{\mu}\vd_{\nu}\mathcal{ R}-6 \frac{\mathcal{  R}}{F^2} + 2 g^{(2)\mu\nu}\mathcal{P}_{\mu\nu}
\end{align}
reduces to
\begin{align}
    \mathcal{ R} = 0\,.
\end{align}
\end{theorem}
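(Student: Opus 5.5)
Since $g^{(n)\mu\nu}\mathcal{P}_{\mu\nu}=0$, the vacuum equation takes the form \eqref{eq:Fgrav1} on $\mathcal{T}$. In terms of $u=\mathcal{R}/F^2$ this is \eqref{eq:u2},
\begin{align*}
0=\frac{n}{2}F^{n}g^{(n)\,\mu\nu}\vd_\mu\vd_\nu u+2u .
\end{align*}
Because $F^2>0$ on $\mathcal{T}$, it suffices to show $u\equiv0$ on each fiber $\mathcal{T}_x$; then $\mathcal{R}=0$ on $\mathcal{T}$, and by continuity on $\overline{\mathcal{T}}$. I would fix $x\in M$ and work fiberwise. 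By Lemma \ref{continuous extension} under assumption \textbf{(a)}, $u$ is continuous on $\overline{\mathcal{T}_x}$, attains an absolute maximum and minimum there, and vanishes on $\partial(\mathcal{T}_x)$.

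\emph{Maximum principle step.} Suppose $\sup u>0$ on $\overline{\mathcal{T}_x}$. Since $u=0$ on the light cone, the maximum is attained at some interior direction $\y_0\in\mathcal{T}_x$. Write the equation as $a^{\mu\nu}\vd_\mu\vd_\nu u+c\,u=0$ with $a^{\mu\nu}=\frac n2F^ng^{(n)\mu\nu}$ and $c=2$.

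\emph{Obstacle 1: sign of $a$.} $g^{(n)}$ has Lorentzian signature, so $a$ is not elliptic on $\mathcal{T}_x$. I would use $0$-homogeneity of $u$ and restrict to the three-dimensional indicatrix $\Sigma_x$, where $F=1$ and $u$ carries all the information. On $\Sigma_x$ the Hessian of $F$ is negative definite, so the restriction of $g^{(n)}$ to $T\Sigma_x$ is negative definite. Decompose $\vd_\mu\vd_\nu u$ into its part tangent to $\Sigma_x$ and its parts along $\y$; homogeneity ($\y^\mu\vd_\mu u=0$, and $\y^\mu\vd_\mu\vd_\nu u=-\vd_\nu u$) turns the radial contributions into first-order terms. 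The result is an operator on $\Sigma_x$ of the form $-\Delta_h u+b^k\partial_k u+2u$, with $h$ a Riemannian metric on $\Sigma_x$. Multiplying by $-1$ gives $\Delta_h u-b^k\partial_k u-2u=0$, which is locally uniformly elliptic with zeroth-order coefficient $c=-2\le0$. This is exactly the setting of Hopf's principle in footnote \ref{foot:max_prin_summary}.

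\emph{Obstacle 2: compactness.} The maximum is attained in the open set $\Sigma_x$, which is noncompact (it corresponds to $\mathcal{T}_x$ modulo scaling). Hopf's principle only needs local uniform ellipticity and local bounds, which hold because everything is smooth on $\mathcal{T}$. So $u$ is constant on the connected set $\Sigma_x$, hence on $\mathcal{T}_x$. By continuity up to $\partial(\mathcal{T}_x)$, where $u=0$, that constant is $0$, contradicting $\sup u>0$. Applying the same argument to $-u$ rules out $\inf u<0$.

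Hence $u\equiv0$ and $\mathcal{R}=F^2u=0$. Conversely, $\mathcal{R}=0$ trivially solves the equation, so the two are equivalent.

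The main obstacle is Obstacle 1: checking that the reduced operator on $\Sigma_x$ is genuinely elliptic with $c\le0$ after accounting for the Lorentzian signature. I expect the overall sign to work out because the negative definiteness of $F_{\cdot\mu\cdot\nu}$ on $T\Sigma_x$ gives a definite sign. If the sign turned out wrong, I would fall back on treating the equation through the radial variable, with the light-cone boundary condition supplying the decay.
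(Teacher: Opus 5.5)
Your proposal is correct and follows essentially the paper's own argument. Lemma~\ref{continuous extension} gives continuity of $u=\mathcal{R}/F^2$ on $\overline{\mathcal{T}_x}$ and $u=0$ on $\partial(\mathcal{T}_x)$; $0$-homogeneity turns \eqref{eq:u2} into a locally uniformly elliptic equation on $\Sigma_x$ with $c=-2$; and Hopf's strong maximum principle applied to $u$ and $-u$ forces $u\equiv 0$. The differences are cosmetic: your $\sup u>0$ and $\inf u<0$ contradiction is a tidier form of the paper's case analysis, and the paper shows the constant is $0$ by substituting it into \eqref{eq:u2}, which gives $2u=0$ directly, rather than by continuity up to the light cone.
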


To prove this theorem, we follow the steps outlined in \cite[Lem. 2.6 and Th. 2.6]{villasenor:2024}:
\begin{itemize}
    \item Introduce coordinates
    \begin{align}
        r(\y):=F(x,\y), \qquad z^i(\y):=z^i_\Sigma (\tfrac{\y}{F(x,\y)}),\ i=1,2,3\,,
    \end{align}
   on each tangent space $T_xM$, where $\left(z^i_\Sigma\right)$ is a pre-chosen (global, $C^\infty$) coordinate system on the manifold $\Sigma_x = \{\dot x\in \mathcal{T} \colon F(x,\dot x) = 1\}$.
    \item In these coordinates, any $m$-homogeneous function $h$ on $\mathcal{T}$ can be written as
    \begin{align}
        h(x,\dot x) = F(x,\dot x)^m h(x,\tfrac{\dot x}{F(x,\dot x)}) = r^m h(x,z) = r^m h_\Sigma\,.
    \end{align}
    \item In $T_xM$, for our zero-homogeneous variable of interest $u$ (recall \eqref{u}), this means that
    \begin{align}
        u = u_\Sigma(z)\,.
    \end{align}
    \item The Finsler gravity equation \eqref{eq:u2} can be rewritten in its $n=2$ representation as an elliptic PDE for $u_\Sigma(z)$ see \cite[(2.58) to (2.59) and (A) of Lem. 2.6.]{villasenor:2024}:
    \begin{align}\label{eq:uelipt}
        a^{ij}\partial^2_{z^i z^j} u_\Sigma+b^k\partial_{z^k}u_\Sigma+c\,u_\Sigma=0\,,
    \end{align}
    with 
    \begin{align}
        a^{ij} &= -r^2\left(g^{(2)}\right)^{-1}(\Dd z^i,\Dd z^j) = -r^2g^{(2)\,\mu\nu}(z)\,\frac{\partial z^i}{\partial\y^\mu}\frac{\partial z^j}{\partial\y^\nu}\,,\\
        b^k &= -r^2g^{(2)\,\mu\nu}\frac{\partial^2 z^k}{\partial\y^\mu\partial\y^\nu}\,,\\
        c &=-2\,.
    \end{align}
    \item Applying the strong maximum principle (see footnote \ref{foot:max_prin_summary}) to \eqref{eq:uelipt} proves that $u=0$ and thus $\mathcal{R}=0$.
\end{itemize}
Let us detail this last step more formally:
\begin{proof}
The assumptions of Theorem \ref{thm:maximumprinciplemain} immediately reduce the Finsler gravity vacuum equation to \eqref{eq:u2} with $n=2$.

Assuming the regularity condition {\bf (a)} of Section \ref{ssec:lightlike regularity} means that both statements of Lemma \ref{continuous extension} hold. Part i) guarantees that $u$ assumes its absolute maximum and its minimum at some directions in $\overline{\mathcal{T}_x}$. Denote:
\begin{itemize}
    \item by $\dot x_{\alpha +} \in T_xM$, the directions for which $u(\dot x_{\alpha +}) =  \underset{_{\overline{\mathcal{T}_x}}}{\max }\,u $;
    \item by $\dot x_{\alpha -} \in T_xM$, the directions for which $u(\dot x_{\alpha -}) = \underset{_{\overline{\mathcal{T}_x}}}{\min }\,u$.
\end{itemize}
Here, $\alpha\in\mathbb{R}$: in principle, there can be infinitely many of these directions, even as many as real numbers. Let us focus on the maxima $\y_{\alpha +}$. There can be either some $\y_{\alpha +}\in \mathcal{T}_x$, or all of the $\y_{\alpha +}$ are in $\partial(\mathcal{T}_x)$. 

Assume there is at least one $\y_{\alpha +}$ in $ \mathcal{T}_x$. Then, we can normalize it to define $z_{\alpha +}:=\dot x_{\alpha +}/F(x,\dot x_{\alpha +})\in \Sigma_x$. Hence, $u_\Sigma$ attains its maximum there while solving the elliptic PDE \eqref{eq:uelipt}, and the strong maximum principle \cite[Th. 3.5]{gilbarg-trudinger} applies. This is the case because the coefficients in \eqref{eq:uelipt} satisfy $c=-2\leq 0$ and the appropriate local requirements, see footnote \ref{foot:max_prin_summary}.  (In turn, these requirements hold due to $a^{ij}$ and $b^k$ being continuous functions of just $(z^1,z^2,z^3)$, together with the positive definiteness of $(a^{ij})$.)

The following two cases exist:
\begin{itemize}
    \item[1.]  $u(\y_{\alpha +})$ is nonnegative, so the maximum principle applies straightforwardly.
    \item[2.]  $u(\y_{\alpha +})$ is negative. Then, there also exists a minimum $u(\dot x_{\alpha -})<0$, which by the statement \eqref{u at boundary} must also lie in $\mathcal{T}_x$, i.e., \ $\dot x_{\alpha -}\in \mathcal{T}_x$. Hence, it can be normalized to $z_{\alpha -}:=\dot x_{\alpha -}/F(x,\dot x_{\alpha -})\in \Sigma_x$ and the function $-u_\Sigma$ attains a positive maximum
    there, so the strong maximum principle can be applied to it.
\end{itemize}
In any case, the maximum principle implies that $\left.u\right|_{\Sigma_x}$ is constant. By $0$-homogeneity, $u$ must be constant in the whole tangent space $T_xM$, so $u(x,\y)=u(x)$. Even more, it is enough to use this information in \eqref{eq:u2} to find that actually $u=0$, and thus the desired reduction of the Finsler gravity equation follows:
\begin{align}
    \mathcal{R}= 0\,.
\end{align}
This reasoning can be carried out at all $x\in M$, as we assumed that the constant $n$ exists everywhere. 

To complete the proof in the missing cases, assume that there is no $\dot x_{\alpha +}\in \mathcal{T}_x$. Then, as long as there is at least one minimum $\dot x_{\alpha -}\in \mathcal{T}_x$, the whole argument can be repeated in terms of $-u(\dot x_{\alpha -})$. One concludes that either $\mathcal{R}= 0$ or no minimum lies in the set of timelike vectors at $x$.

Thus, only the case remains where all the maxima and minima lie at the lightlike directions, i.e. $\dot x_{\alpha +},\dot x_{\alpha -}\in \partial(\mathcal{T}_x)$
for all $\alpha$. Then, according to \eqref{u at boundary},
\begin{align}
    0 = u(\dot x_{\alpha -}) = \min u \leq u \leq \max u = u(\dot x_{\alpha +}) = 0\,:
\end{align}
also in this case, the conclusion is $u=0\Rightarrow \mathcal{R}= 0$.
\end{proof}

\subsubsection{The Palatini Finsler gravity equation} \label{palatini theory 1}
In order to conclude that the metric Finsler gravity equation \eqref{eq:Fgrav1} reduces to $\mathcal{R}=0$, we required certain regularity conditions on $F^n$. 

In the Palatini approach, see Section \ref{ssec:Fgrav}, there exist two independent variables, namely the independent non-linear connection coefficients $\bar N^\mu{}_\nu$ and the Finsler function $F$. Hence, to conclude that the Palatini Finsler gravity equation \eqref{eq:Fgrav2} reduces to $\bar{\mathcal{R}}=0$ with the same argument as in the metric case, we need to specify the corresponding conditions on $\bar N^\mu{}_\nu$ and $F$.

The original reduction \cite[Th. 2.7]{villasenor:2024} was obtained precisely for the Palatini theory, but under the rather strong assumption that $F^2$ and $g^{(2)}$ are regular on the light cones. To weaken the assumption, from $F^2$ to $F^n$, we rewrite the Palatini equation similarly as we did in the metric case:
\begin{align}
    0 
    &= g^{(2)\,\mu\nu}\vd_{\mu}\vd_{\nu}\mathcal{\bar R} -6 \frac{\mathcal{\bar  R}}{F^2}\\
    &= \frac{n}{2}F^{n-2} g^{(n)\,\mu\nu}\vd_{\mu}\vd_{\nu}\mathcal{\bar R} 
    + \frac{2(1-2n)}{n-1} u \label{eq:uPal}\\
    &= \frac{n}{2}F^{n} g^{(n)\,\mu\nu}\vd_{\mu}\vd_{\nu}u + 2 u\,,\label{eq:u2Pal}
\end{align}
where here $u$ is defined by the curvature scalar of the independent non-linear connection:
\begin{align}
    u := \frac{\bar{\mathcal{R}}}{F^2}\,.
\end{align}
Now, the arguments from the previous section about the strong maximum principle hold as long as $g^{(n)}_{\quad\mu\nu}$ and $\bar N^\mu{}_\nu$ extend to $\overline{\mathcal{T}}$ with enough regularity. Without repeating all the steps, we can state the result:

\begin{theorem} \label{thm:main extension 1}
    Let $(F,\bar N^\mu_{\nu})$ be the fields to be determined as a solution of the Finslerian Palatini equations~\eqref{eq:PalatR} and~\eqref{eq:PalatN} in 4 dimensions. 

    If $N^\mu_\nu$ is at least five times differentiable ($C^5$) on $\overline{\mathcal{T}}$, and if there exists an $n\in\mathbb{R}$ with $n\geq 2$ such thatrrr $F^n$ is at least $C^4$ and $g^{(n)}_{\mu\nu}$ is non-degenerate on $\overline{\mathcal{T}}$ (so in particular on $\partial \mathcal{T}$, where $F^n=0$), then the equation \eqref{eq:PalatR} reduces to
    \begin{equation}
        \bar{\mathcal{R}}=0.
        \label{reduced palatini eq.}
    \end{equation}
\end{theorem}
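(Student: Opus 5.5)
The argument transplants the proof of Theorem \ref{thm:maximumprinciplemain} to the Palatini setting. The only real change is the source of regularity for the curvature term. In the metric case, $\mathcal{R}$ is built from $F^n$ alone. Here $\bar{\mathcal{R}}$ is built from the independent connection, so $\bar{\mathcal{R}}=\bar R^\sigma{}_{\sigma\mu}\dot x^\mu$ involves only first derivatives of $\bar N^\mu{}_\nu$ (through $[\delta_\mu,\delta_\nu]$). Consequently $\vd_\mu\vd_\nu\bar{\mathcal{R}}$ involves at most third derivatives of $\bar N$. The hypothesis that $N$, and hence $\bar N=N+J$, is sufficiently differentiable on $\overline{\mathcal{T}}$ makes $\vd_\mu\vd_\nu\bar{\mathcal{R}}$ continuous (in fact $C^2$) up to the light cones. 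Meanwhile $F^n\in C^4$ with $g^{(n)}$ non-degenerate makes $g^{(n)\mu\nu}$ at least $C^2$ there.

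\textbf{Step 1: continuity of $u$.} I use representation \eqref{eq:uPal} with the given $n\geq2$. Since $n\ge2$, $F^{n-2}$ is continuous on $\overline{\mathcal{T}_x}$, so the whole term $\frac n2F^{n-2}g^{(n)\mu\nu}\vd_\mu\vd_\nu\bar{\mathcal R}$ is continuous. Equation \eqref{eq:uPal} then shows that $u=\bar{\mathcal R}/F^2$ extends continuously to $\overline{\mathcal{T}_x}$. By $0$-homogeneity and Proposition \ref{properties proj}, $u$ attains its absolute maximum and minimum on $\overline{\mathcal T_x}$.

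\textbf{Step 2: $u=0$ on the light cone.} I argue as in Lemma \ref{continuous extension} ii). For $n>2$, the factor $F^{n-2}$ vanishes on $\partial(\mathcal T_x)$ while the factor multiplying it stays bounded, so \eqref{eq:uPal} forces $u|_{\partial(\mathcal{T}_x)}=0$. For $n=2$, I have $u=\frac16 g^{(2)\mu\nu}\vd_\mu\vd_\nu\bar{\mathcal R}$. The differentiability assumptions on $\bar N$ and $F^2$ make this $C^2$ up to the boundary, so $F^2g^{(2)\mu\nu}\vd_\mu\vd_\nu u$ vanishes on $\partial\mathcal T_x$, and \eqref{eq:u2Pal} gives $u=0$ there. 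I expect this bookkeeping of derivative orders to be the main obstacle. I must check that $C^5$ for $N$ and $C^4$ for $F^n$ suffice: third derivatives of $\bar N$ in $\vd\vd\bar{\mathcal R}$ plus two more for $\vd\vd u$ use five derivatives of $\bar N$, and $g^{(n)\mu\nu}\in C^2$ needs $F^n\in C^4$. I also have to handle the fact that the hypothesis is stated for $N$ rather than $\bar N$. I would interpret it as regularity of the connection entering $\bar{\mathcal R}$, i.e.\ of $\bar N$, or deduce it from $N$ together with \eqref{eq:PalatN} for $J$.

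\textbf{Step 3: maximum principle.} In the coordinates $(r,z^i)$ on $T_xM$, equation \eqref{eq:u2Pal} with $n=2$ becomes the elliptic equation \eqref{eq:uelipt} for $u_\Sigma(z)$. The coefficients $a^{ij},b^k$ are identical to the metric case, since they depend only on $F$, and $c=-2\le0$. The case analysis from the proof of Theorem \ref{thm:maximumprinciplemain} then applies verbatim. If a maximum or minimum lies in $\mathcal T_x$, the strong maximum principle, applied to $u_\Sigma$ or to $-u_\Sigma$, makes $u$ constant on $\Sigma_x$, hence on $T_xM$, and then \eqref{eq:u2Pal} gives $u=0$. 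Otherwise all extrema lie on $\partial\mathcal T_x$, where Step 2 gives $0=\min u\le u\le\max u=0$. Either way $\bar{\mathcal R}=F^2u=0$ at every $x$, which proves \eqref{reduced palatini eq.}.
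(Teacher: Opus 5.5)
Your proposal is correct and follows the paper's own route: the paper proves Theorem~\ref{thm:main extension 1} by rerunning Lemma~\ref{continuous extension} and Theorem~\ref{thm:maximumprinciplemain} for $u=\bar{\mathcal R}/F^2$. In that argument, $C^5$ regularity of the connection entering $\bar{\mathcal R}$ gives $\vd_\mu\vd_\nu\bar{\mathcal R}\in C^2$, and $F^n\in C^4$ gives $g^{(n)\mu\nu}\in C^2$ up to the light cones, which is exactly your derivative count. Reading the hypothesis as regularity of $\bar N$ is the intended meaning, as the sentence just before the theorem confirms; your opening ``hence'' from $N$ to $\bar N=N+J$ is not automatic and is not needed.
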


%%%%%%%%%%%%%%%%%%%%%%%%%%%%%%%%%%%%%%%%%%%%%%%%%%%%%%%%%%%%%%%%%%%%%%%%%%%%%%%%
%%%%%%%%%%%%%%%%%%%%%%%%%%%%%%%%%%%%%%%%%%%%%%%%%%%%%%%%%%%%%%%%%%%%%%%%%%%%%%%%
\subsection{Reducing the vacuum equations from a division by zero argument}\label{ssec:divbyzero}
In this section, we show that asking for regularity conditions of $F^n$ in all lightlike directions (lightlike regularity condition {\bf (a)} from Section \ref{ssec:lightlike regularity}), as we did in Theorem \ref{thm:maximumprinciplemain}, is not the only way to reduce the Finsler gravity equation. The alternative is to employ the lightlike regularity condition {\bf (b)} from Section \ref{ssec:lightlike regularity}, which requires stronger regularity but on a smaller set, that is, not all the lightlike directions. We refer to Appendix \ref{app:TechDet} for the necessary theory on functions on subsets of $TM$ and those on subsets of a fixed tangent space $T_xM$. The end results are stated in all detail in Theorems \ref{thm:analyticity main} and \ref{thm:analyticity_palatini}.

We leverage the technique originally introduced in \cite[\S 5.1]{villasenor:2022}. Abstractly, one may call it\footnote{In \cite[Lem. 5.4 and Th. 5.8]{villasenor:2022} it was used to prove uniqueness of $N^\mu_\nu$ in the Finslerian Palatini theory; in \cite[\S (C2)]{villasenor:2024} it was called \emph{division by powers of $F^2$ on $\partial \mathcal{T}$}. Here we sketch a general formulation, since it might be useful in other settings in the future.} \emph{infinite division by 0} and describe it as follows. 

Consider a function $v(\dot x)$ on $T_xM$ and assume that it solves a partial differential equation in which another smooth function $L(\dot x)$ appears. Moreover, assume that there exists some fixed $\dot x_0$ such that $L(\dot x_0)=0$. Then, we perform the following steps:
\begin{itemize}
    \item Try to use the PDE to recursively prove that the quotients $\frac{v}{L^m}$ with $m\in\mathbb{N}\cup\left\{0\right\}$ extend to $\dot x_0$ smoothly.

    \item \label{description 2} When successful, this implies that the derivatives of $v$ \emph{of all orders} vanish at $\dot x_0$.

    \item Assume that $v$ is analytic (in the real multivariable sense) including the point $\dot x_0$. From this,  one deduces that $v=0$ identically.
\end{itemize}

For the application of these steps to the Finsler gravity equations \eqref{eq:Fgrav1} or \eqref{eq:Fgrav2}, we fix $x\in M$ and use $\left.\mathcal{R}\right|_{\mathcal{T}_x}$ as the unknown $v$ (compare with Section \ref{ssec:maximumprinciple}). We start by investigating the case of the metric Finsler gravity equation and afterwards we adapt the arguments to the Palatini formulation.

\subsubsection{The metric Finsler gravity equation}
We first establish the following lemma; it leads  to the regularity condition \textbf{(b)} in Section \ref{ssec:lightlike regularity}.
\begin{lemma} \label{recursion lemma}
    Assume that $F$ solves the metric Finslerian vacuum equation in $\mathcal{T}$ with $g^{(n)\mu\nu}\mathcal{P}_{\mu\nu} =0$, which means that \eqref{eq:Fgrav1} holds. Suppose that for some $n\in\mathbb{R}\setminus\left\{0,1\right\}$ and $B\subseteq\partial \mathcal{T}$, the function $F^n$ extends smoothly to $\mathcal{T}\cup B$ with $\left.g^{(n)}_{\quad\mu\nu}\right|_B$ non-degenerate. \\
    If $\left(3m-m^2\right)n^2-4n+2\neq 0$ for all $m\in\mathbb{N}\cup\left\{0\right\}$, then, for each such $m$, the quotient $\left.\frac{\mathcal{R}}{F^{mn}}\right|_{\mathcal{T}_x}$ extends smoothly to $\left(\mathcal{T}\cup B\right)_x$; in other words, 
    \begin{equation}
        \mathcal{R}=\left(F^n\right)^m \mathcal{Q}_m \quad\hbox{\rm for some}\quad \mathcal{Q}_m\in C^\infty(\left(\mathcal{T}\cup B\right)_x).
        \label{thesis recursion lemma}
    \end{equation}
\end{lemma}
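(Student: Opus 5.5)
The plan is induction on $m$, with $L:=F^n$ as the smooth function that vanishes on $B$. I will use the metric equation in the form \eqref{eq:Fgrav1} multiplied by $F^n$:
\begin{equation*}
\frac{n}{2}\,L\,g^{(n)\mu\nu}\vd_\mu\vd_\nu\mathcal{R}+\frac{2(1-2n)}{n-1}\,\mathcal{R}=0 \quad\hbox{on } \mathcal{T}_x .
\end{equation*}
Every coefficient here is built from $L$ alone.

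\emph{Base case $m=0$.} I need $\mathcal{R}$ itself to be smooth on $(\mathcal{T}\cup B)_x$. Non-degeneracy of $g^{(n)}$ is an open condition, so $g^{(n)\mu\nu}$ is smooth on a neighbourhood of $B$ inside the domain where $L$ is smooth. The spray representation \eqref{eq:spray_Fn} then shows that $G^\mu$, and hence $N^\mu{}_\nu$, $R^\mu{}_{\nu\rho}$ and $\mathcal{R}$ (see \eqref{eq:curvatures_Fn}), extend smoothly to $\mathcal{T}\cup B$. Restricting to the fibre at $x$ gives $\mathcal{Q}_0=\mathcal{R}$.

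\emph{Inductive step.} Assume $\mathcal{R}=L^m\mathcal{Q}_m$ with $\mathcal{Q}_m$ smooth on $(\mathcal{T}\cup B)_x$. On $\mathcal{T}_x$, $\mathcal{Q}_m$ is homogeneous of degree $2-mn$, and by continuity Euler's relation also holds on $B$. Expand
\begin{equation*}
\vd_\mu\vd_\nu(L^m\mathcal{Q}_m)=m(m-1)L^{m-2}L_{\cdot\mu}L_{\cdot\nu}\mathcal{Q}_m+mL^{m-1}\big(L_{\cdot\mu\cdot\nu}\mathcal{Q}_m+2L_{\cdot(\mu}\mathcal{Q}_{m\cdot\nu)}\big)+L^m\vd_\mu\vd_\nu\mathcal{Q}_m ,
\end{equation*}
and contract with $g^{(n)\mu\nu}$ using three identities:
\begin{itemize}
\item $g^{(n)\mu\nu}L_{\cdot\mu\cdot\nu}=8$, since the dimension is $4$;
\item $L_{\cdot\mu}=\frac{2}{n-1}g^{(n)}_{\ \mu\nu}\y^\nu$, obtained from Euler applied to $L_{\cdot\mu}$, which gives $g^{(n)\mu\nu}L_{\cdot\mu}L_{\cdot\nu}=\frac{2n}{n-1}L$;
\item $g^{(n)\mu\nu}L_{\cdot\mu}\mathcal{Q}_{m\cdot\nu}=\frac{2(2-mn)}{n-1}\mathcal{Q}_m$.
\end{itemize}
After multiplying by $L$, every term except the last becomes $L^m$ times a constant multiple of $\mathcal{Q}_m$. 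The equation therefore reads
\begin{equation*}
\kappa_m\,L^m\mathcal{Q}_m=-\frac{n}{2}\,L^{m+1}g^{(n)\mu\nu}\vd_\mu\vd_\nu\mathcal{Q}_m ,
\end{equation*}
where $\kappa_m$ is an explicit polynomial in $m$ and $n$ divided by $n-1$. The coefficient bookkeeping must then be checked carefully so that $\kappa_m$ comes out proportional to $(3m-m^2)n^2-4n+2$; at $m=0$ it is $2-4n$, as it should be. By hypothesis $\kappa_m\neq0$.

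Dividing by $L^m>0$ on $\mathcal{T}_x$ gives $\mathcal{Q}_m=L\cdot S$ with $S:=-\frac{n}{2\kappa_m}g^{(n)\mu\nu}\vd_\mu\vd_\nu\mathcal{Q}_m$. The function $S$ is manifestly smooth on $(\mathcal{T}\cup B)_x$. Setting $\mathcal{Q}_{m+1}:=S$ gives $\mathcal{R}=L^{m+1}\mathcal{Q}_{m+1}$ on $\mathcal{T}_x$, and the right-hand side is the required smooth extension. The point is that we never divide by $L$ on $B$: the PDE itself supplies the quotient as a smooth expression.

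The main obstacle I expect is the exact computation of $\kappa_m$. It has to be shown to match the stated non-resonance polynomial, so the Euler identities and homogeneity degrees must be tracked precisely. A secondary point is justifying that these identities, which are derived on $\mathcal{T}_x$, persist on $B$; this follows by continuity of smooth extensions. If the constant comes out differently, I would first re-examine the degree $2-mn$ of $\mathcal{Q}_m$ and the cross term $2L_{\cdot(\mu}\mathcal{Q}_{m\cdot\nu)}$.
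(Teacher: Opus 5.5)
Your proposal is correct and follows the paper's proof essentially step by step. Like the paper, you argue by induction on $m$ with $L=F^n$, expand $\vd_\mu\vd_\nu(L^m\mathcal{Q}_m)$, contract with $g^{(n)\mu\nu}$ using the same Euler-type identities, and read off $\mathcal{Q}_{m+1}$ as a constant multiple of $g^{(n)\mu\nu}\vd_\mu\vd_\nu\mathcal{Q}_m$. The bookkeeping you deferred does close: with your identities the contracted terms contribute $\frac{n}{2}\bigl(\frac{2m(m-1)n}{n-1}+8m+\frac{4m(2-mn)}{n-1}\bigr)=\frac{(3m-m^2)n^2}{n-1}$, so $\kappa_m=\frac{(3m-m^2)n^2-4n+2}{n-1}$, which is exactly the paper's constant.
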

\begin{proof}
    By induction on $m$. In the case $m=0$, we only have to check that $\mathcal{R}$($=\mathcal{Q}_0$) extends smoothly $(C^\infty)$ to $\left(\mathcal{T}\cup B\right)_x  =\mathcal{T}_x\cup B_x$. But the assumed smoothness of $F^n$ and non-degeneracy of $g^{(n)\,\mu\nu}$ on $\left(\mathcal{T}\cup B\right)_x$ already guarantee this.
    
    Assume that the assertion holds for $m\in\mathbb{N}\cup\left\{0\right\}$ and let us prove it for $m+1$. \vspace{10pt}
    
    Abbreviate $L:=F^n$. Then, by assumption, we have that $\mathcal{R}=L^m \mathcal{Q}_m$ extends smoothly to $\left(\mathcal{T}\cup B\right)_x$. We now show that from this follows that $\mathcal{R}=L^{m+1} \mathcal{Q}_{m+1}$ for some $\mathcal{Q}_m\in C^\infty(\left(\mathcal{T}\cup B\right)_x)$. \vspace{10pt}

    Applying two vertical derivatives to the product $\mathcal{R}=L^m \mathcal{Q}_m$ yields
    \begin{align}
    {\mathcal{R}}_{\cdot\mu\cdot\nu}=m\left(m-1\right)L^{m-2}\mathcal{Q}_m\,L_{\cdot\mu}L_{\cdot\nu}+mL^{m-1}\mathcal{Q}_m\,L_{\cdot\mu\cdot\nu}+mL^{m-1}L_{\cdot\mu}\mathcal{Q}_m{}_{\cdot\nu}+mL^{m-1}L_{\cdot\nu}\mathcal{Q}_m{}_{\cdot\mu}+L^m\mathcal{Q}_m{}_{\cdot\mu\cdot\nu}.
    \end{align}
    The general identities $L_{\cdot\mu\cdot\nu}=2g^{(n)}_{\quad\mu\nu}$ and
    $L_{\cdot\mu}=\frac{2}{n-1}g^{(n)}_{\quad\mu\rho}\y^\rho$ make it so that
    \begin{equation}
        \begin{split}
             g^{(n)\,\mu\nu}{\mathcal{R}}_{\cdot\mu\cdot\nu}  =g^{(n)\,\mu\nu} & \left(  \frac{4m\left(m-1\right)}{\left(n-1\right)^2}L^{m-2}\mathcal{Q}_m{}\,g^{(n)}_{\quad\mu\rho}\y^\rho g^{(n)}_{\quad\nu\sigma}\y^\sigma+2mL^{m-1}\mathcal{Q}_m{}\,g^{(n)}_{\quad\mu\nu}+\right. \\ 
            &\;\;+\left.\frac{2m}{n-1}L^{m-1}g^{(n)}_{\quad\mu\rho}\y^\rho\mathcal{Q}_m{}_{\cdot\nu}+\frac{2m}{n-1}L^{m-1}g^{(n)}_{\quad\nu\sigma}\y^\sigma\mathcal{Q}_m{}_{\cdot\mu}+L^m\mathcal{Q}_m{}_{\cdot\mu\cdot\nu}\right).
        \end{split}
        \label{recursion}
    \end{equation}
    We use further identities for the different terms appearing in \eqref{recursion}: for the first one, $g^{(n)}_{\quad\mu\nu}\y^\mu\y^\nu=\frac{n\left(n-1\right)}{2}L$; for the second one, $ g^{(n)\,\mu\nu}g^{(n)}_{\quad\mu\nu}=4$; for the third and fourth ones, $\y^\mu\mathcal{Q}_m{}_{\cdot\mu}=\left(2-mn\right)\mathcal{Q}_m{}$. (Namely, Euler's theorem for the $\left(2-mn\right)$-homogeneous $\mathcal{Q}_m{}=\frac{\mathcal{R}}{F^{mn}}$.) All in all, \eqref{recursion} becomes
    \begin{equation}
    \begin{split}
             g^{(n)\,\mu\nu}{\mathcal{R}}_{\cdot\mu\cdot\nu}  = & \left(  \frac{2m\left(m-1\right)n}{n-1}L^{m-1}\mathcal{Q}_m{}+8mL^{m-1}\mathcal{Q}_m{}\right. \\ 
            &\;\;+\left.\frac{4m\left(2-mn\right)}{n-1}L^{m-1}\mathcal{Q}_m{}+L^mg^{(n)\,\mu\nu}\mathcal{Q}_m{}_{\cdot\mu\cdot\nu}\right).
        \end{split}
    \end{equation}
    Substituting this into the Finsler gravity equation \eqref{eq:Fgrav1} and rearranging gives
    \begin{align}
    \left[ \frac{2(n-2)}{n-1}-6+\frac{n}{2}\left(\frac{2m\left(m-1\right)n}{n-1}+8m+\frac{4m\left(2-mn\right)}{n-1}\right)\right]\mathcal{R}=-\frac{nF^n}{2 }L^{m}g^{(n)\,\mu\nu}\mathcal{Q}_m{}_{\cdot\mu\cdot\nu},
    \end{align}
    which can be simplified to
    \begin{align}
    \left( \frac{\left(3m-m^2\right)n^2-4n+2}{n-1}\right)\mathcal{R}=-\frac{n}{2 }L^{m+1}g^{(n)\,\mu\nu}\mathcal{Q}_m{}_{\cdot\mu\cdot\nu}.
    \end{align}
    As, among our hypotheses, the numerator on the left-hand side is not $0$, we get $\mathcal{R}=L^{m+1}\mathcal{Q}_{m+1}$ with $\mathcal{Q}_{m+1}$ being a constant times $g^{(n)\,\mu\nu}\mathcal{Q}_m{}_{\cdot\mu\cdot\nu}$. And since $g^{(n)\,\mu\nu}$ and $\mathcal{Q}_m{}$ were already in $C^\infty(\left(\mathcal{T}\cup B\right)_x)$, so is $\mathcal{Q}_{m+1}$. This observation concludes our proof by induction.
\end{proof}
For real applicability of Lemma \ref{recursion lemma}, we need to characterize when the condition on $n$ and $m$ is satisfied.
\begin{lemma} \label{forbidden values}
    For $n\in\mathbb{R}\setminus\left\{0,1\right\}$, the following are equivalent:
    \begin{enumerate} [label=\roman*)]
        \item  \label{forbidden values 1} $\left(3m-m^2\right)n^2-4n+2\neq 0$ for all $m\in\mathbb{N}\cup\left\{0\right\}$. 

        \item  \label{forbidden values 2} $\frac{3+\sqrt{
    9-\frac{16}{n}+\frac{8}{n^2}
    }}{2}\notin\mathbb{N}$.
    \end{enumerate}
\end{lemma}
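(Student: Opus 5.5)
The plan is to read condition \ref{forbidden values 1} as a statement about the nonnegative integer roots of a quadratic in $m$, and then to show that such a root exists exactly when the larger root is a natural number. Since $n\neq 0$, I will divide $\left(3m-m^2\right)n^2-4n+2=0$ by $-n^2$ and get the monic equation
\begin{equation*}
    m^2-3m+\frac{4}{n}-\frac{2}{n^2}=0 .
\end{equation*}
Its discriminant is
\begin{equation*}
    \Delta=9-\frac{16}{n}+\frac{8}{n^2}=8\left(\frac{1}{n}-1\right)^2+1\geq 1 .
\end{equation*}
So for every admissible $n$ there are two real roots $m_\pm=\frac{3\pm\sqrt{\Delta}}{2}$. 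Moreover $\sqrt{\Delta}\geq1$, which gives $m_+\geq 2$, and Vieta gives $m_++m_-=3$. Hence \ref{forbidden values 1} fails if and only if $m_+\in\mathbb{N}\cup\{0\}$ or $m_-\in\mathbb{N}\cup\{0\}$.

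Next I will show that this disjunction is equivalent to $m_+\in\mathbb{N}$, so that \ref{forbidden values 1} is exactly the negation of $m_+\in\mathbb{N}$, which is \ref{forbidden values 2}. The direction ``$m_+\in\mathbb{N}$ implies a forbidden $m$'' is immediate, because then $m=m_+$ is itself a natural number root. For the converse, if $m_+$ is a nonnegative integer, then it is a natural number since $m_+\geq 2$. If instead $m_-$ is a nonnegative integer, then $m_+=3-m_-$ is an integer with $m_+\leq 3$, and together with $m_+\geq2$ this forces $m_+\in\{2,3\}\subset\mathbb{N}$. In both cases $m_+\in\mathbb{N}$.

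The only step that needs care is ruling out a root $m_-\in\mathbb{N}\cup\{0\}$ that is not already detected by $m_+$, including the edge case $m=0$. The relation $m_++m_-=3$ handles this uniformly: $m=0$ is a root precisely when $\frac{4}{n}-\frac{2}{n^2}=0$, i.e.\ $n=\tfrac12$, and then $\Delta=9$ and $m_+=3\in\mathbb{N}$, consistent with the general argument. Finally, for the remark in condition~\textbf{(b)}, I will note that $n>1$ gives $0<\frac1n<1$. This yields $1<\Delta<9$, so $2<m_+<3$, hence $m_+\notin\mathbb{N}$ and \ref{forbidden values 2} holds automatically.
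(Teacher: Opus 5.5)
Your proof is correct and follows essentially the same route as the paper. Both arguments solve the quadratic in $m$, use $m_++m_-=3$ to show that a root $m_-\in\mathbb{N}\cup\{0\}$ forces $m_+$ to be an integer, and then exclude $m_+=0$; you do this via $m_+\geq 2$, the paper via $\sqrt{\cdot}\geq 0$. Your identity $9-\frac{16}{n}+\frac{8}{n^2}=8\left(\frac{1}{n}-1\right)^2+1\geq 1$ is a small improvement: it shows the roots are always real, which the paper's proof uses tacitly, and it also gives a quick proof of the paper's unproved remark that $n>1$ implies condition ii).
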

\begin{proof}
    We show that \ref{forbidden values 1} does not hold if and only if \ref{forbidden values 2} does not hold.

    \ref{forbidden values 1} does not hold if and only if $\left(3m-m^2\right)n^2-4n+2 =  0$ for some $m\in\mathbb{N}\cup\left\{0\right\}$. Solving this equation yields
    \begin{align}
    m_+:=\frac{3}{2}+\frac{\sqrt{
    9-\frac{16}{n}+\frac{8}{n^2}
    }}{2}\in\mathbb{N}\cup\left\{0\right\}\qquad\hbox{\rm or }\qquad  m_-:=\frac{3}{2}-\frac{\sqrt{
    9-\frac{16}{n}+\frac{8}{n^2}
    }}{2}\in\mathbb{N}\cup\left\{0\right\}.
    \end{align}
    
    Observe that if $m_-\in\mathbb{N}\cup\left\{0\right\}$, then $m_+\geq m_-\geq 0$ and $m_++m_- =3$. Hence, $m_- \leq 3$ and $m_-$ can only take the values $0$, $1$, $2$ or $3$; consequently, $m_+ = 3 - m_-$ would take the values $3$, $2$, $1$ or $0$. This means that $m_+\in\mathbb{N}\cup\left\{0\right\}$ follows from $m_-\in\mathbb{N}\cup\left\{0\right\}$. Because of this, the negation of \ref{forbidden values 1} becomes equivalent to $m_+$ being a natural number or zero, namely  
    \begin{align}
        \frac{3+\sqrt{9-\frac{16}{n}+\frac{8}{n^2}}}{2}\in\mathbb{N}\cup\left\{0\right\}.
    \end{align}
    
    But this quantity cannot ever be $0$, as $\sqrt{9-\frac{16}{n}+\frac{8}{n^2}}\geq 0$. In this way, we have proved that \ref{forbidden values 1} does not hold if and only if $m_+\in\mathbb{N}$, which was our goal.
\end{proof}

\begin{remark} 
    This most precise and rather complex condition can be understood in a simplfied way by observing that
\begin{align}
    n>1\Longrightarrow \frac{3+\sqrt{9-\frac{16}{n}+\frac{8}{n^2}}}{2}\notin\mathbb{N}.
\end{align}
So, $ n>1$ is an easy condition under which the main theorem below is applicable. There are other straightforward conditions that guarantee that $\frac{3+\sqrt{9-\frac{16}{n}+\frac{8}{n^2}}}{2}\notin\mathbb{N}$, such as $n<-\frac{1+\sqrt{3}}{2}$, but we discard this in Theorem \ref{thm:analyticity main}. The reason is that $n>0$ is required there, in order for $F^n$ to be $0$, and not $\infty$, at the boundary subset $B$.
\end{remark}

\begin{theorem} \label{thm:analyticity main}  
Let $F$ be a solution of the Finsler gravity vacuum equation with $g^{(n)\mu\nu}\mathcal{P}_{\mu\nu} =0$, namely \eqref{eq:Fgrav1}. Suppose that condition {\bf (b)} in Section \ref{ssec:lightlike regularity} holds. If, for a specific $x\in M$, the set $B\cap\partial(\mathcal{T}_x)$ is non-empty and $\left.\mathcal{R}\right|_{\left(\mathcal{T}\cup B\right)_x}$ is (real-)analytic, then 
\begin{equation}
    \left.\mathcal{R}\right|_{\left(\mathcal{T}\cup B\right)_x}=0.
    \label{R = 0 again}
\end{equation}

\end{theorem}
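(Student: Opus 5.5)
The plan is to run the ``infinite division by zero'' scheme with $v=\left.\mathcal{R}\right|_{(\mathcal{T}\cup B)_x}$ and $L=F^n$. Condition \textbf{(b)} supplies a positive $n\neq1$ with $\frac{3+\sqrt{9-\frac{16}{n}+\frac{8}{n^2}}}{2}\notin\mathbb{N}$, so by Lemma \ref{forbidden values} we have $(3m-m^2)n^2-4n+2\neq0$ for every $m\in\mathbb{N}\cup\{0\}$. Lemma \ref{recursion lemma} then applies at the fixed $x$ and gives, for every $m$, a factorization
\begin{equation*}
\mathcal{R}=(F^n)^m\mathcal{Q}_m\quad\text{with}\quad \mathcal{Q}_m\in C^\infty\bigl((\mathcal{T}\cup B)_x\bigr).
\end{equation*}

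Next fix a direction $\dot x_0\in B\cap\partial(\mathcal{T}_x)$, which exists by hypothesis. Since $n>0$ and $F$ extends continuously by $0$ to $\partial\mathcal{T}$, the smooth function $L=F^n$ satisfies $L(\dot x_0)=0$. Take any multi-index $\beta$ with $|\beta|=k$ and choose $m=k+1$. Applying the Leibniz rule to $L^{k+1}\mathcal{Q}_{k+1}$, every term of $\partial^\beta_{\dot x}$ retains at least one undifferentiated factor of $L$, because at most $k$ of the $k+1$ factors can be hit. Each term is a product of smooth functions on $(\mathcal{T}\cup B)_x$, so it vanishes at $\dot x_0$. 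Hence all derivatives of $\mathcal{R}$ of all orders vanish at $\dot x_0$.

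The final step is analyticity. The function $\mathcal{R}$ is real-analytic on $(\mathcal{T}\cup B)_x$, so its Taylor series at $\dot x_0$ is identically zero, and $\mathcal{R}$ vanishes on a neighbourhood of $\dot x_0$ relative to $(\mathcal{T}\cup B)_x$. I would then invoke the identity theorem for real-analytic functions on a connected set to conclude $\mathcal{R}\equiv0$ on $(\mathcal{T}\cup B)_x$. Connectedness holds because $\mathcal{T}_x$ is connected by Definition \ref{finsler lagrangian def}, and adding boundary points of $B_x$ to it keeps it connected, since $\mathcal{T}_x$ is dense in $\mathcal{T}_x\cup B_x$.

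I expect this last step to be the main obstacle. Because $\dot x_0$ lies on the boundary, the meaning of ``analytic on $(\mathcal{T}\cup B)_x$'' and of a neighbourhood of $\dot x_0$ has to be made precise. I would use the Appendix \ref{app:TechDet} convention that analyticity at $\dot x_0$ means a convergent power series on an open ball around $\dot x_0$ in $T_xM$. That ball meets the open set $\mathcal{T}_x$ in a nonempty open set on which $\mathcal{R}=0$. The identity theorem on the connected open set $\mathcal{T}_x$ then gives $\mathcal{R}=0$ there, and continuity extends this to $B_x$. A second point to check is that the derivatives used in the Leibniz step are one-sided derivatives at boundary points; they are well defined because $F^n$ and each $\mathcal{Q}_m$ are smooth up to $B$.
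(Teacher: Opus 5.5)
Your proposal is correct and is essentially the paper's proof: Lemma \ref{forbidden values} and Lemma \ref{recursion lemma} give $\mathcal{R}=(F^n)^{m+1}\mathcal{Q}_{m+1}$, so every vertical derivative of $\mathcal{R}$ vanishes at some $\dot x_0\in B\cap\partial(\mathcal{T}_x)$, and analyticity together with connectedness of $\mathcal{T}_x$ then forces $\mathcal{R}=0$. Your handling of the boundary step just spells out the paper's appeal to unique analytic continuation (a power series on a ball around $\dot x_0$ that meets $\mathcal{T}_x$ in a nonempty open set, the identity theorem on $\mathcal{T}_x$, then passage to $B_x$), and on $B_x$ you do not even need continuity, since there $\mathcal{R}=F^n\mathcal{Q}_1$ with $F^n=0$.
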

\begin{proof}
Following \cite[Th. 2.3]{villasenor:2024}, we shall use our Lemma \ref{recursion lemma} to show the vanishing of $\mathcal{R}$. Given any $m\in\mathbb{N}\cup\left\{0\right\}$, compute the $m$-th vertical derivatives of \eqref{thesis recursion lemma} in its form $\mathcal{R}=\left(F^n\right)^{m+1} \mathcal{Q}_{m+1}$ to obtain
\begin{align}
    \mathcal{R}_{\cdot \mu_1\ldots\cdot \mu_m}=F^n\,T_{\mu_1\ldots \mu_m}\,,
\end{align}
where $T_{\mu_1\ldots \mu_m}\in C^\infty(\left(\mathcal{T}\cup B\right)_x)$ (in particular, it does not contain any negative power of $F$). Then, provided that $ B\cap\partial (\mathcal{T}_x)$ is non-empty, there exist at least one $\dot x_0\in B\cap\partial (\mathcal{T}_x)$; there, the following holds:
\begin{align}
    \mathcal{R}_{\cdot \mu_1\ldots\cdot \mu_m}(\y_0)=F(\y_0)^n \,T_{\mu_1\ldots \mu_m}(\y_0)=0\cdot T_{\mu_1\ldots \mu_m}(\y_0)=0.
\end{align}
This is true for any $m\in\mathbb{N}\cup\left\{0\right\}$: the full (vertical) Taylor series of $\mathcal{R}$ vanishes around $\y_0$. By the assumed analyticity, $\mathcal{R}=0$, in principle in some neighborhood of $\y_0$ in $\left(\mathcal{T}\cup B\right)_x$, but afterwards in this whole set due to the principle of unique analytic continuation and the connectedness of $\mathcal{T}_x$. This is precisely \eqref{R = 0 again}.
\end{proof}
\begin{remark}
    We emphasize that we only demand analyticity of the \emph{restriction} of $\mathcal{R}$ to the particular causal cone where one wants to apply Theorem \ref{thm:analyticity main}. In \cite[Def. 5.6]{villasenor:2022}, when demanded at all $x\in M$ (and on the whole $\overline{\mathcal{T}_x}=\mathcal{T}_x\cup\partial(\mathcal{T}_x)$), this notion was called \emph{fiberwise analyticity}. But even this would be weaker than the analyticity of $\mathcal{R}$ as a function on the tangent bundle, i.e., the full $\mathcal{R}(x,\y)$.

    As examples, analyticity of $\left.\mathcal{R}\right|_{\left(\mathcal{T}\cup B\right)_x}$ holds if:
    \begin{itemize}
        \item $\left.\mathcal{R}\right|_{\mathcal{T}_x}$ is quadratic, that is, a homogeneous polynomial $p(\y^0,\y^1,\y^2,\y^3)$ of degree $2$. In particular, when $F$ is of Berwald type.
        
        \item $\left.\mathcal{R}\right|_{\mathcal{T}_x}$ is a rational function $\frac{p(\y)}{q(\y)}$, where $q(\y)$ and $p(\y)$ are homogeneous polynomials of degrees, resp., $s$ and $s+2$ such that $q(\y)$ has no zeros on $\left(\mathcal{T}\cup B\right)_x$.
        
        \item $\left.\mathcal{R}\right|_{\mathcal{T}_x}$ is some combination of rational functions and real powers constructed from $(\y^0,\ldots,\y^3)$ such that the resulting degree is $2$ and the relevant quantities never vanish on $\left(\mathcal{T}\cup B\right)_x$.
    \end{itemize}

As a concrete illustration of Finsler functions with analytic $\left.\mathcal{R}\right|_{\left(\mathcal{T}\cup B\right)_x}$, consider $m$-th root metrics \eqref{eq:mth}, for which
\begin{align}
     L(x,\y):=F(x,\y)^m= G_{\mu_1\ldots \mu_m}(x)\,\dot x^{\mu_1}\ldots\dot x^{\mu_m}.
\end{align}
From \eqref{eq:spray_Fn} and \eqref{eq:curvatures_Fn}, it is easy to see that $\mathcal{R}(x,\y)$ is always rational in $\y$, its denominator $q(\y)$ being a power of $d(\y):=\det(G_{\mu\nu\rho_1\ldots \rho_{m-2}}(x)\,\dot x^{\rho_1}\ldots\dot x^{\rho_{m-2}})$. So, all it takes for $\left.\mathcal{R}\right|_{\left(\mathcal{T}\cup B\right)_x}$ to be analytic is that
\begin{align}
  d(\y)\neq 0 \; \hbox{ \rm  in } \; \mathcal{T}_x
\end{align}
and for some $\y_0$,
\begin{align}
    d(\y)\neq 0,\quad \y_0\in\partial(\mathcal{T}_x)\;\hbox{ \rm (so } \; G_{\mu_1\ldots \mu_m}(x)\,\dot x_0^{\mu_1}\ldots\dot x_0^{\mu_m}=0\hbox{\rm )}.
\end{align}
In order to apply Theorem \ref{thm:analyticity main} to the $m$-th root metric, one would need to identify $m$-th root metrics with vanishing Landsberg term $g^{(n)\mu\nu}\mathcal{P}_{\mu\nu} = 0$.
\end{remark}

We have seen how the metric Finsler gravity equation can be reduced in case that there exists a $n$-th power of the Finsler function with certain regularity conditions on some subset of the lighlike directions. Lastly, we apply these arguments to the Palatini formulation of Finsler gravity.

\subsubsection{The Palatini Finsler gravity equation} \label{palatini theory 2}

Here, we state the Palatini version of Theorem \ref{thm:analyticity main}, in a similar fashion as how Theorem \ref{thm:main extension 1} is the Palatini version of Theorem \ref{thm:maximumprinciplemain}. Again, the proof consists in repeating the steps from the previous subsubsection, basically word by word, replacing the $F$-based curvature scalar $\mathcal{R}$ by the $\bar{\mathcal{R}}$ constructed from an independent non-linear connection $\bar{N}^\mu_\nu$.
\begin{theorem} \label{thm:analyticity_palatini}
    Let $(F,\bar{N}^\mu_{\nu})$ be a $4$-dimensional solution of the Finslerian Palatini equation \eqref{eq:Fgrav2}. Suppose that condition {\bf (b)} in Section \ref{ssec:lightlike regularity} holds. If for an $x\in M$, the set $B\cap\partial(\mathcal{T}_x)$ is nonempty and $\left.\bar{\mathcal{R}}\right|_{\mathcal{T}_x}$ extends in a (real-)analytic way to  $\left(\mathcal{T}\cup B\right)_x$, then
    \begin{equation}
        \left.\bar{\mathcal{R}}\right|_{\left(\mathcal{T}\cup B\right)_x}=0.
    \end{equation}
\end{theorem}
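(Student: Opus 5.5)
The plan is to repeat the proof of Theorem \ref{thm:analyticity main}, with $\bar{\mathcal{R}}$ in place of $\mathcal{R}$. The only inputs that argument used were the scalar equation \eqref{eq:Fgrav1}, the regularity of $F^n$ and $g^{(n)}$, and the smooth extendability of the unknown to $(\mathcal{T}\cup B)_x$.

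First I set up a Palatini analogue of Lemma \ref{recursion lemma}. Fix $x$ and write $L:=F^n$. By condition \textbf{(b)}, $L$ is smooth on $(\mathcal{T}\cup B)_x$ and $g^{(n)\,\mu\nu}$ exists and is smooth there. The base case $m=0$ cannot be justified as in the metric case, because $\bar{\mathcal{R}}$ is built from the independent connection $\bar N$, not from $F^n$. Instead I take the analytic extension of $\bar{\mathcal{R}}|_{\mathcal{T}_x}$ to $(\mathcal{T}\cup B)_x$, which is assumed in the statement; it gives $\mathcal{Q}_0:=\bar{\mathcal{R}}\in C^\infty((\mathcal{T}\cup B)_x)$.

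For the induction step, suppose $\bar{\mathcal{R}}=L^m\mathcal{Q}_m$. The computation in the proof of Lemma \ref{recursion lemma} uses only three facts: the identities $L_{\cdot\mu\cdot\nu}=2g^{(n)}_{\mu\nu}$ and $L_{\cdot\mu}=\tfrac{2}{n-1}g^{(n)}_{\mu\rho}\dot x^\rho$; Euler's theorem for $\mathcal{Q}_m$; and the $2$-homogeneity of the unknown. The first is a property of $F^n$ alone. For the third, $\bar{\mathcal{R}}=\bar R^\sigma{}_{\sigma\mu}\dot x^\mu$ is $2$-homogeneous whenever $\bar N^\mu{}_\nu$ is $1$-homogeneous, as it is for a non-linear connection. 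Then $\mathcal{Q}_m$ is $(2-mn)$-homogeneous, which gives the second. Substituting into \eqref{eq:Fgrav2} produces, on $\mathcal{T}_x$, exactly
\begin{align}
\frac{(3m-m^2)n^2-4n+2}{n-1}\,\bar{\mathcal{R}}=-\frac{n}{2}L^{m+1}g^{(n)\,\mu\nu}\mathcal{Q}_{m\cdot\mu\cdot\nu}.
\end{align}
By Lemma \ref{forbidden values} and the hypothesis on $n$ in \textbf{(b)}, the prefactor is nonzero. So $\mathcal{Q}_{m+1}$, a constant times $g^{(n)\,\mu\nu}\mathcal{Q}_{m\cdot\mu\cdot\nu}$, is smooth on $(\mathcal{T}\cup B)_x$. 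The identity $\bar{\mathcal{R}}=L^{m+1}\mathcal{Q}_{m+1}$ holds on $\mathcal{T}_x$, and it passes to $B_x$ by continuity, since both sides are smooth there.

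Next I follow the proof of Theorem \ref{thm:analyticity main}. Pick $\dot x_0\in B\cap\partial(\mathcal{T}_x)$ and any order $k$. Differentiating $\bar{\mathcal{R}}=L^{k+1}\mathcal{Q}_{k+1}$ $k$ times leaves at least one factor of $L$ in every term, so each $k$-th vertical derivative of $\bar{\mathcal{R}}$ is $L\cdot T$ with $T$ smooth. Since $n>0$, $L(\dot x_0)=0$, so all derivatives of $\bar{\mathcal{R}}$ vanish at $\dot x_0$. By analyticity, $\bar{\mathcal{R}}$ vanishes on a neighbourhood of $\dot x_0$ in $(\mathcal{T}\cup B)_x$. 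This neighbourhood meets the open set $\mathcal{T}_x$, which is connected by Definition \ref{finsler lagrangian def}. The identity theorem then gives $\bar{\mathcal{R}}=0$ on $\mathcal{T}_x$, and on $B_x$ by continuity.

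The main obstacle is the base case and the use of $\bar N$. Unlike the metric case, the smoothness of $\bar{\mathcal{R}}$ up to $B$ does not follow from the properties of $F^n$. I would derive it from the analyticity hypothesis, or equivalently from assuming $\bar N$ is smooth up to $B$. I would also check that the homogeneity of $\bar N$ is part of the setup, so that $\bar{\mathcal{R}}$ is $2$-homogeneous as the Euler-theorem step requires.
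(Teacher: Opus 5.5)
Your proposal is correct and follows the paper's route. The paper proves this theorem by repeating Lemma \ref{recursion lemma} and the proof of Theorem \ref{thm:analyticity main} essentially word by word, with $\bar{\mathcal{R}}$ in place of $\mathcal{R}$. You correctly note that the base case $m=0$ must now come from the assumed analytic extension of $\bar{\mathcal{R}}$ rather than from the regularity of $F^n$, which is exactly why the statement is phrased with that hypothesis.
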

The case in which {\bf (b)} is satisfied for $n=2$ was already mentioned in \cite[Th. 0.5 (ii)]{villasenor:2024}, but without a detailed proof. With this, we have achieved a secondary aim: to explicitly prove the statements in \cite[\S (C2)]{villasenor:2024}.

%--------------------------------------------------------------

\section{On Berwald and Landsberg type solutions } \label{eq:uniex}
The results derived in the previous sections and stated precisely in Theorems \ref{thm:maximumprinciplemain}, \ref{thm:main extension 1}, \ref{thm:analyticity main} and \ref{thm:analyticity_palatini} are particularly useful for Finsler functions of Landsberg and Berwald type \eqref{FinslerLadder}. For them, the Finsler gravity equations in the metric and the Palatini formalisms have exactly the form \eqref{eq:Fgrav3}, which is the one studied in the Theorems.

We comment on the consequences of our theorems for Berwald and Landsberg type Finsler functions and solve the Finsler gravity vacuum equations for the so-called cosmological unicorn geometry.

\subsection{Berwald solutions revisited}

During the last years, one main strategy to construct solutions to the Finsler gravity equations has been to focus on Berwald geometries \cite{FPP18,PHF21,Heefer:2024kfi}. These have the advantages that $\mathcal{P}_{\mu\nu}=0$ and $\mathcal{R}(x,\y)$ is quadratic in $\y$ (so it enjoys the partial analyticity required in Theorem \ref{thm:analyticity main}). However, it was not clear whether the easy sufficient condition $\mathcal{R}=0$ for vacuum solutions was also necessary. Now, as immediate consequences of our results in Theorems \ref{thm:maximumprinciplemain} and \ref{thm:analyticity main}, we know that this is the case when there exists \emph{some} $n$ such that $F^n$ is %as regular 
as stated in the lightlike regularity conditions {\bf (a)} or {\bf (b)} of Section \ref{ssec:lightlike regularity}:
\begin{corollary}
    Let $F$ be a Berwald type solution of the Finsler gravity vacuum equation \eqref{eq:Fgrav1}. Suppose that either of the lightlike regularity conditions {\bf (a)} or {\bf (b)} in Section \ref{ssec:lightlike regularity} holds, the latter, with $B\cap\partial(\mathcal{T}_x)$ nonempty at each $x\in M$. Then,
    \begin{align}
        \mathcal{R}=0.
    \end{align}
\end{corollary}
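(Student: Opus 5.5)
The corollary follows by checking, for Berwald type $F$, the hypotheses of Theorem \ref{thm:maximumprinciplemain} under condition \textbf{(a)}, and of Theorem \ref{thm:analyticity main} under condition \textbf{(b)}.

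The first step is to verify the Landsberg hypothesis $g^{(n)\mu\nu}\mathcal{P}_{\mu\nu}=0$. For a Berwald spacetime, \eqref{eq:BerwGeom} gives $N^\mu{}_\nu=\Gamma^\mu{}_{\nu\sigma}(x)\y^\sigma$. Hence the dynamical covariant derivative $\nabla$ is the horizontal lift of an affine connection, and $g^{(2)}$ is parallel along it. It follows that $P_{\mu\nu\rho}=\nabla C_{\mu\nu\rho}=0$, since $C$ is the vertical derivative of a parallel metric and vertical derivatives commute with $\nabla$ in the Berwald case. In particular $P_\mu=0$, so $\mathcal{P}_{\mu\nu}=2\vd_\mu\nabla P_\nu-\nabla(\vd_\mu P_\nu)=0$ identically. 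This is the standard inclusion Berwald $\subset$ Landsberg of \eqref{FinslerLadder}. The vacuum equation therefore takes the form \eqref{eq:Fgrav1} for every admissible $n$, and the condition holds for any $n$ as noted at the start of Section \ref{sec:RedVacEq}.

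The second step is the case split.

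\emph{Case (a).} If condition \textbf{(a)} holds, the hypotheses of Theorem \ref{thm:maximumprinciplemain} are exactly satisfied: $n\geq2$, $F^n\in C^8$, $g^{(n)}$ is non-degenerate on the light cones, and the Landsberg term vanishes. That theorem then gives $\mathcal{R}=0$ on $\overline{\mathcal{T}}$.

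\emph{Case (b).} If condition \textbf{(b)} holds with $B\cap\partial(\mathcal{T}_x)\neq\emptyset$ for every $x$, we apply Theorem \ref{thm:analyticity main} pointwise in $x$. The only hypothesis left to check is that $\mathcal{R}|_{(\mathcal{T}\cup B)_x}$ is real-analytic.
\begin{itemize}
\item For Berwald geometry, the spray is $G^\mu=\tfrac12\Gamma^\mu{}_{\nu\sigma}(x)\y^\nu\y^\sigma$, so the non-linear curvature is built from the Riemann tensor of $\Gamma$ contracted with $\y$.
\item Thus $\mathcal{R}(x,\y)=R_{\nu\sigma}(x)\y^\nu\y^\sigma$ is a homogeneous quadratic polynomial in $\y$, as noted in the first example of the remark following Theorem \ref{thm:analyticity main}.
\item Polynomials are entire, so the restriction is analytic on all of $T_xM$, and in particular on $(\mathcal{T}\cup B)_x$.
\end{itemize}
Theorem \ref{thm:analyticity main} then yields $\mathcal{R}=0$ on $(\mathcal{T}\cup B)_x$ for each $x$, hence on $\mathcal{T}$.

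The main obstacle is a subtlety in case (b): the quadratic form $R_{\nu\sigma}(x)\y^\nu\y^\sigma$ is analytic on its own, regardless of how regular $F$ is. Vanishing on the open set $\mathcal{T}_x$ then forces $R_{(\nu\sigma)}(x)=0$, so $\mathcal{R}=0$ on every direction, not merely on the timelike ones. Otherwise the corollary is a direct assembly of results already proved, with the Landsberg-term vanishing as the only input requiring a (standard) justification.
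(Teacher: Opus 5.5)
Your proposal is correct and follows essentially the paper's own route: the paper likewise just notes that Berwald geometries have $\mathcal{P}_{\mu\nu}=0$ and a quadratic (hence analytic) $\mathcal{R}=R_{\mu\nu}(x)\dot x^\mu\dot x^\nu$, and then invokes Theorem~\ref{thm:maximumprinciplemain} under \textbf{(a)} and Theorem~\ref{thm:analyticity main} under \textbf{(b)}. One small correction concerns your justification of $P_{\mu\nu\rho}=0$: $\dot\partial_\rho$ does not commute with the dynamical derivative $\nabla$, even in Berwald spaces, because in general $\dot\partial_\rho\nabla T-\nabla\dot\partial_\rho T=D_\rho T$ (the horizontal Berwald covariant derivative), giving $-2P_{\mu\nu\rho}=D_\rho g^{(2)}{}_{\mu\nu}$; the real input is therefore $D_\rho g^{(2)}{}_{\mu\nu}=0$ in all horizontal directions, which follows by differentiating $\delta_\rho F^2=0$ twice vertically using the $\dot x$-independence of $\Gamma^\mu{}_{\nu\sigma}$, and this is exactly the standard Berwald $\subset$ Landsberg fact that the paper also takes for granted.
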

Even more, for Berwald type Finsler spacetimes, $\mathcal{R}=R_{\mu\nu}(x)\,\dot x^\mu \dot x ^\nu$, where %$R_{\mu\nu}(x)$ 
$R_{\mu\nu}$ is the Ricci tensor of the affine connection induced by the Berwald %type 
geometry. Hence, in these cases the vacuum Finsler gravity equation indeed reduces formally to the same expression as the Einstein vacuum equations
\begin{align}
    R_{\mu\nu}(x) = 0\,.
\end{align}

\subsection{Solving the cosmological unicorn field equations}

Having proven the conditions under which \eqref{eq:FGravVac_0} reduces to $\mathcal{R}=0$, in Theorems \ref{thm:maximumprinciplemain} and \ref{thm:analyticity main}, we now apply our results to determine the dynamics of the scale factor for the homogeneous and isotropic (cosmological) unicorn Finsler spacetime, found in \cite{Friedl-Szasz:2024vtu}. In coordinates $(t,r,\theta,\phi)$, it is defined by the Finsler function
\begin{align}\label{eq:FinslerUnicornFLRW}
    F_{\rm uni}
    :=  \left( \dot {t} f -  a(t) w \right)^{\frac{ f}{f-1}} \left( a(t) w - \dot {t} \right)^{-\frac{1 }{f-1}}\,,
\end{align}
where
\begin{align}\label{eq:w}
    w^2 = \frac{\dot r^2}{1- \kappa r^2} + r^2 (\dot \theta^2 + \sin^2\theta\,\dot\phi^2)
\end{align}
is the maximally symmetric metric on a $3$-dimensional manifold, also known as the spatial part of the Friedmann-Lema\^itre-Robertson-Walker (FLRW) geometry; $a(t)$ is the cosmological scale factor, which physically determines the expansion of the universe; and $f\in\mathbb{R}$ is a parameter.

For any choice of $f$, this Finsler function is of Landsberg type, i.e. its Landsberg tensor $P_{\mu\nu\rho}$ vanishes, and therefore the Finsler gravity equation is reduced to \eqref{eq:Fgrav1}. However, only for $f<0$ does $F_{\rm uni}$ give a Finsler spacetime with a well-defined causal structure \cite{Friedl-Szasz:2024vtu}. For $f=-1$, the geometry becomes exactly the pseudo-Riemannian FLRW one.

We now demonstrate that there exist powers of $F_{\rm uni}$ to which Theorem \ref{thm:analyticity main} applies, further reducing the gravitational field equation to $\mathcal{R}=0$.

\subsubsection{Reduction of the equation}
In order to apply Theorem \ref{thm:analyticity main}, we need to study the non-degeneracy of  $g^{(n)}$ and the smoothness of $F_{\rm uni}^{\; n}$ itself. 

As the parameter $f$ must be negative, it is convenient to introduce $Q^2:=-f$, and to consider the $n$-th power of $F_{\rm uni}$ with $n=(Q^2+1)  N $:
\begin{align}\label{eq:FinslerUnicornFLRW^n}
    F_{\rm uni}^{(Q^2+1) N }
    =  \left( - \dot {t} Q^2 -  a(t) w \right)^{Q^2  N } \left( a(t) w - \dot {t} \right)^{ N }\,,
\end{align}
where $ N >0$ is another arbitrary parameter. The determinant of the corresponding metric $g^{\left((Q^2+1) N \right)}$ is, using \cite[Eq. (67)]{Friedl-Szasz:2024vtu} and  \eqref{rel determinants},
\begin{align}\label{eq:detFinslerUnicornFLRW^n2}
    \det g^{\left((Q^2+1) N \right)}_{\mu\nu} =  \frac{a^2 \det w_{ij}}{16} (1+Q^2)^2 Q^2  N ^2 \left(1- N (1+Q^2)\right) \left( - \dot {t} Q^2 -  a(t) w\right)^{4 (Q^2  N -1)}\left( a(t) w - \dot {t}\right)^{4 ( N -1)}\,.
\end{align}

At each spacetime point $x$, the Finsler function \eqref{eq:FinslerUnicornFLRW} defines two light cones via $F_{\rm uni}=0$, meaning that
\begin{align}\label{eq:unilight}
    \left( a(t) w - \dot {t} \right) = 0 \quad \textrm{or} \quad ( - \dot {t} Q^2 -  a(t) w) = 0\,.
\end{align}

The important and striking observation is that for each of the light cones, we can choose an $ N $ such that $\det g^{\left((Q^2+1) N \right)}_{\mu\nu}$ is non-vanishing on that cone, meaning that $g^{\left((Q^2+1) N \right)}$ is non-degenerate, while $F_{\rm uni}^{(Q^2+1) N }$ becomes smooth. Thanks to this, Theorem \ref{thm:analyticity main} will apply by choosing $B$ to be the appropriate cone in each of the cases:
\begin{enumerate}
    \item Consider the first light cone, $\dot t=a(t) w$, that can be interpreted as future pointing lightlike directions, as long as $a(t) w >0$ and thus $\dot t>0$. The determinant $\det g^{\left((Q^2+1) N \right)}_{\mu\nu}$ in \eqref{eq:detFinslerUnicornFLRW^n2} is non-vanishing and non-diverging there for $ N =1$, for which
    \begin{align}\label{eq:FinslerUnicornFLRW^n2}
        F_{\rm uni}^{(Q^2+1)}
        =  \left( - \dot {t} Q^2 -  a(t) w \right)^{Q^2 } \left( a(t) w - \dot {t} \right)\,,
    \end{align}
    which clearly is smooth ($C^\infty$) on the causal cone under consideration. So, in this case, choose $n=Q^2+1$ and $B=\left\{(x,\y)\in \partial\mathcal{T}\colon\,\dot t=a(t) w \right\}$.

    \item The same argument holds when we consider the second light cone, $\dot t= - a(t) w/Q^2$, that can be interpreted as past pointing lightlike directions as $\dot t < 0$ as long as $a(t) w >0$, for another choice of $ N $. The determinant \eqref{eq:detFinslerUnicornFLRW^n2} is non-vanishing and non-diverging there for $ N =1/Q^2$, for which 
    \begin{align}\label{eq:FinslerUnicornFLRW^n3}
        F_{\rm uni}^{(Q^2+1)/Q^2}
        =  \left( - \dot {t} Q^2 -  a(t) w \right) \left( a(t) w - \dot {t} \right)^{1/Q^2}.
    \end{align}
    Analogously as the previous case, this power of $F$ is smooth ($C^\infty$) on the second causal cone, so now choose $n=(Q^2+1)/Q^2$ and $B=\left\{(x,\y)\in \partial\mathcal{T}\colon\,\dot t=- a(t) w/Q^2 \right\}$.
\end{enumerate}
For both light cones, we have shown that\footnote{ There is the subtlety that $F^{\; n}_{\rm uni}$ is just continuous, not $C^\infty$, at a line inside each timelike cone: $\left\{w=0\right\}=\mathrm{Span}\,\partial_t$. This however, does not pose any obstruction on our conclusion, which can be seen as follows. One, technically, regards these lines as part of the topological boundary $\partial\mathcal{T}$, despite $F_{\rm uni}$ not being $0$ on them as prescribed by Definition \ref{finsler lagrangian def}. Then, one can check, easily but tediously, that this detail never affects the proofs of Lemma \ref{recursion lemma} and Theorem \ref{thm:analyticity main}, nor the development of this subsection (the lines are never chosen as part of $B$). The conclusion \eqref{eq:R=0_for_unicorn} stands. } the lightlike regularity condition {\bf (b)} of Section \ref{ssec:lightlike regularity} holds. We can apply Theorem \ref{thm:analyticity main} at all points $x$, since at any of them, the chosen $B$ touch the boundary $\partial(\mathcal{T}_x)$ of the timelike cone and $\mathcal{R}$ will be an analytic function of $\dot{x}$ (one can explicitly check this in \eqref{eq:Runi}). The conclusion is that \eqref{eq:Fgrav1} always reduces to
\begin{align}
    \mathcal{R}=0\,.
   \label{eq:R=0_for_unicorn}
\end{align}

We have just demonstrated that for all causal directions, meaning all future 
pointing lightlike and timelike directions, of the unicorn Finsler geometry \eqref{eq:FinslerUnicornFLRW}, the dynamics, i.e. the scale factor, is determined simply by $\mathcal{R}=0$ instead of from the more complicated field equation \eqref{eq:Fgrav1}. 

As a last comment, we could also have regarded \eqref{eq:FinslerUnicornFLRW} as solving the Finslerian Palatini theory and carried the argument out from Theorem \ref{thm:analyticity_palatini}, as, being weakly Landsberg  ($P_\mu=0$), the unicorn geometry would then solve \eqref{eq:PalatN} and \eqref{eq:PalatR} with $\bar N^\mu{}_{\nu} = N^\mu{}_{\nu}$ and $\mathcal{\bar R}=\mathcal{R}$.

To complete this article, we derive the Ricci curvature of $F_{\rm uni}$ and solve the Finsler gravity equation from its, very simple, reduced form $\mathcal{R}=0$.

\subsubsection{Solving the Finsler gravity equation}

Spatially homogeneous and isotropic Finsler spacetimes have been studied in quite some detail \cite{Hohmann:2020mgs,Heefer:2023a,Pfeifer_2025}. Using the results from \cite[(B.28)]{Pfeifer_2025}, we can write the curvature scalar for homogeneous and isotropic Finsler functions of the type
\begin{align}
    F^2 = \dot t^2 h(t,s)^2,\quad s:=\frac{w}{\dot t}\,,
\end{align}
as
\begin{equation}
\begin{split}
\label{eq:R_unicorn_very_big}
    \frac{\mathcal{R}}{w^2} 
    &=   \frac{R^\mu{}_{\mu\nu}\dot x^\nu }{w^2}
    = -2 \kappa 
    -  \frac{3  h'''^2 \dot h'^2}{s^2 4 h''^4} 
    -  \frac{9  \bigl(h'' \dot h -  h' \dot h'\bigr)^2}{4h^2 s^2  h''^2} 
    + \frac{  \dot h' \Bigl(s h'''' \dot h' + h''' \bigl(2 \dot h' + 3 s \dot h''\bigr)\Bigr)}{2s^3 h''^3} 
    + \frac{   2 \ddot h' + s \ddot h''}{2s^3  h''}\\
    &+ \frac{3   \biggl(- h' h''' \dot h'^2 + 2 h' h'' \dot h' \dot h'' + h''^3 \ddot h -  h''^2 \Bigl(\dot h'^2 + h' \ddot h'\Bigr)\biggr)}{2h s^2  h''^3} 
    -  \frac{ -6 \dot h'^2 + 4 s \dot h' \bigl(2 \dot h'' + s \dot h'''\bigr) + s^2 \Bigl(\dot h''^2 + 2 h''' \ddot h'\Bigr)}{4s^4  h''^2}\,.
\end{split} 
\end{equation}
Here, the prime denotes the derivative with respect to $s$, and the dot, the derivative with respect to $t$.

The Finsler function $F_{\rm uni}$ of \eqref{eq:FinslerUnicornFLRW} can be written as
\begin{align}\label{eq:FinslerUnicornFLRW2}
    F_{\rm uni}^2
    =  \dot t^2 \left( f -  a(t) \frac{w}{\dot t} \right)^{\frac{2 f}{f-1}} \left( a(t) \frac{w}{\dot t} - 1 \right)^{-\frac{2 }{f-1}}\,,
\end{align}
which leads to the identification
\begin{align}
    h(t,s)^2 = \left( f -  a(t) s \right)^{\frac{2 f}{f-1}} \left( a(t) s - 1 \right)^{-\frac{2 }{f-1}}\,.
\end{align}
Evaluating the curvature scalar for this choice of $h$ yields
\begin{align}\label{eq:Runi}
    \mathcal{R}
    &= \frac{a''(t) \left(-2 (f+1) s\,a(t)+s^2 a(t)^2+3 f\right)w^2}{f s^2 a(t)}-\frac{(f-1)^2 a'(t)^2w^2}{2 f^2}-2 \kappa w^2\\
    &=\frac{3 \dot t^2 a''(t)}{a(t)}-\frac{2 \dot t w(f+1)  a''(t)}{f}+w^2 \left(\frac{a(t) a''(t)}{f}-\frac{(f-1)^2 a'(t)^2}{2 f^2}-2 \kappa\right)\,.
\end{align}
Most interestingly, this Ricci curvature is well defined on all lightlike and timelike directions, even at $w=0$, despite $F_{\rm uni}^{\;n}$ not being differentiable there. 

We see that the unicorn Ansatz \eqref{eq:FinslerUnicornFLRW} solves the (reduced) vacuum field equation $\mathcal{R}=0$ if and only if 
\begin{align}
    a(t) = c_1 + \frac{2 f \sqrt{\kappa}}{\sqrt{-(f-1)^2}} t\,;
\end{align}
hence, it leads to a real scale factor only for $\kappa<0$ or $\kappa=0$. This can be easily seen, as \eqref{eq:Runi} is polynomial in the variables $\dot t$ and $w$, and thus the coefficient in front of each power of these variables must vanish separately in order for the full polynomial to vanish. The $\dot t^2$-term implies that $a''(t)=0 \Rightarrow a(t)=c_1 + c_2 t$. Using this in the $w$-term determines one of the two constants of integration.

It is remarkable how our Theorems cleared the path to reach this result. We did not need to calculate any further derivatives of $\mathcal{R}$, as it would have been the case if we had limited ourselves to the original Finsler gravity vacuum equation for Landsberg geometries, \eqref{eq:Fgrav1}. In sight of \eqref{eq:R_unicorn_very_big}, one can only imagine the type of expressions that would have come about then.

%--------------------------------------------------------------
%--------------------------------------------------------------
%--------------------------------------------------------------
%--------------------------------------------------------------

\section{Discussion and conclusions}\label{sec:conc}
When one seeks to solve the vacuum Einstein equations, it is nearly trivial to show that this is equivalent to solving $R_{\mu\nu} = 0$. A simple algebraic manipulation suffices to reach this result, which strongly simplifies the search for solutions.

In Finsler gravity, the task is much more difficult. In this article, in Theorems \ref{thm:maximumprinciplemain}, \ref{thm:main extension 1}, \ref{thm:analyticity main} and \ref{thm:analyticity_palatini}, we presented conditions under which the scalar Finsler gravity equation in the metric (respectively, Palatini)  formalism reduces to the vanishing of the Finslerian Ricci curvature $\mathcal{R}=0$ (respectively, $\bar{\mathcal{R}}=0$). Basically, if there  exists a number $n\neq 1$ %(other more involved choices for $n$ are possible) 
(with some minor technical restrictions) such that the $n$-th power of the Finsler function, $F^n$, is regular enough on the light cones of the Finsler spacetime, then the reduction can be done, provided that the Landsberg term in the scalar equation vanishes. To prove this, we developed some non-trivial mathematical analysis, generalizing \cite{villasenor:2022}.

Thus, we found conditions that clarify when the non-variational equation $\mathcal{R}=0$ proposed by S. Rutz \cite{Rutz1993FinslerEinstein} is equivalent to the full action-based Finsler gravity equation. This feature was, to the best of our knowledge, missing in the literature until now.

We used these insights to ensure that for the cosmological unicorn spacetimes, it is necessary and sufficient to solve $\mathcal{R}=0$ in order to make them solutions of the Finsler gravity theory. The former amounts to the study of \eqref{eq:Runi}, which already greatly simplifies the calculation efforts.  In the end, the cosmological unicorn spacetimes that solve the Finsler gravity vacuum equation are exactly those with a scale factor $a(t)$ that is constant or linear in $t$ and vanishing or negative spatial curvature $\kappa$.

As for the future, our findings allow one to search for Finsler spacetimes that are Ricci-flat while being sure that this is equivalent to solving the full vacuum Finsler gravity equation. In an article in preparation, we will apply the techniques developed here to the theory including matter (for example kinetic gases) and with non-zero Landsberg term. Then, one does not have a reduction of the Finsler gravity equation, but it is possible to draw conclusions about the general behavior of the Finslerian Ricci curvature from the additional terms. This will be a first step in the direction of generalizing the famous weak, strong and null energy conditions, which are important tools in relating properties of physical matter to curvature properties of  spacetime to Finsler gravity. In particular, we will study the equivalence between a, to be identified, Finslerian strong energy condition and the timelike convergence condition.

\acknowledgments{C. P. acknowledges support by the excellence cluster QuantumFrontiers of the German Research Foundation (Deutsche Forschungsgemeinschaft, DFG) under Germany's Excellence Strategy -- EXC-2123 QuantumFrontiers -- 390837967 and was funded by the Deutsche Forschungsgemeinschaft (DFG, German Research Foundation) - Project Number 420243324.

F. F. V. was partially supported by the project PID2024-156031NB-I00 funded by MICINN/AEI and FEDER.}

\appendix

\section{Technical details on analysis and topology}\label{app:TechDet}
In this appendix, we clarify some technicalities concerning conic sets and functions on them. We need these technical facts, for instance, in the proof of Lemma \ref{continuous extension} (around equation \eqref{eq:prfL1}) or during the development of Section \ref{ssec:divbyzero}.

We start with the open and conic $\mathcal{T}\subseteq TM\setminus\mathbf{0}$ that serves as domain of our Finsler spacetime functions. To produce further conic sets, one can take its closure $\overline{\mathcal{T}}$ (say, defined as the set of limit points of $\mathcal{T}$) and boundary $\partial \mathcal{T}=\overline{\mathcal{T}}\setminus\mathcal{T}$ in the topological space $TM\setminus\mathbf{0}$. Their \emph{fibers} at $x\in M$ are
\begin{align}
    \mathcal{T}_x:=\mathcal{T}\cap T_xM\setminus\left\{0\right\},\qquad\left(\overline{\mathcal{T}}\right)_x:=\overline{\mathcal{T}}\cap T_xM\setminus\left\{0\right\},\qquad\left(\partial\mathcal{T}\right)_x:=\partial\mathcal{T}\cap T_xM\setminus\left\{0\right\}.
\end{align}
On the other hand, it makes sense to take the closure and boundary of $\mathcal{T}_x$ in the topological space $T_xM\setminus\left\{0\right\}$. We denote them, resp., by $\overline{\mathcal{T}_x}$ and $\partial \mathcal{T}_x$ (and use them at many places, see Sections \ref{ssec:maximumprinciple} and \ref{ssec:divbyzero}). One may prove in general that 
\begin{align}
    \overline{\mathcal{T}_x}\subseteq \left(\overline{\mathcal{T}}\right)_x,\qquad \partial (\mathcal{T}_x)\subseteq\left(\partial\mathcal{T}\right)_x.
\end{align}

Across the whole article, \emph{smooth} means \emph{of class $C^\infty$}; to discuss the various differentiability notions, let $m\in \mathbb{N}\cup\left\{0,\infty\right\}$. Since $\mathcal{T}$ is open in $TM\setminus\mathbf{0}$, it is defined what it means for a function $f$ there to be of class $C^m$. Now let $S\subseteq TM\setminus\mathbf{0}$ be arbitrary (e.g., $\overline{\mathcal{T}}$ or $\mathcal{T}\cup B$ as in Section \ref{ssec:lightlike regularity}) and take $f\colon S\rightarrow\mathbb{R}$. Using standard notions \cite[p. 45]{Lee}, we say that \emph{$f$ is $C^m$ on $S$}, or that $f\in C^m(S)$, if around each $(x,\y)\in S$, there are some open $O\subseteq TM$ and some $\widehat{f}\in C^m(O)$ with $\left.f\right|_{S\cap O}=\left.\widehat{f}\right|_{S\cap O}$.

In parallel, since $\mathcal{T}_x$ is open in $T_xM\setminus\left\{0\right\}$, also a smooth manifold, $C^m(\mathcal{T}_x)$ is defined from the get-go. Let $S_x\subseteq T_xM\setminus\left\{0\right\}$ be arbitrary (e.g., $\overline{\mathcal{T}_x}$ or $\left(\mathcal{T}\cup B\right)_x$) and take $h\colon S_x\rightarrow\mathbb{R}$. We put $h\in C^m(S_x)$ if around each $\y\in S_x$, there are an open $O_x\subseteq T_xM$ and $\widehat{h}\in C^m(O_x)$ with $\left.h\right|_{S_x\cap O_x}=\left.\widehat{h}\right|_{S_x\cap O_x}$.

For these $f$ and $h$:
\begin{itemize}
    \item $f\in C^m(S)$ implies that $\left.f\right|_{S_x}\in C^m(S_x)$ for all fibers $S_x=S\cap T_xM$. But this fiberwise regularity is far from sufficient for the total $f$ to be $C^m$.

    \item Even if $f\in C^m(S)$, or $h\in C^m(S_x)$, the partial derivatives of orders $1$ to $m$, on a given coordinate system, may not be well defined. An easy condition for them to be (i.e., become independent of the chosen $\widehat{f}$ or $\widehat{h}$) is that $S$, or $S_x$, is contained in the closure of its interior, in the respective space. This is always the case in this article.

    \item When $m=0$, what one recovers are continuous functions.
\end{itemize}

In the case of $h$, there is another notion: analyticity. We say that \emph{$h$ is analytic on $S_x$} if around each $\y_0\in S_x$, there are an open $O_x\subseteq T_xM$ and a power series $\underset{m\geq 0}{\sum} a_{\mu_1\ldots\mu_{m}}\left(\y^{\mu_1}-\y_0^{\mu_1}\right)\cdot\ldots\cdot\left(\y^{\mu_m}-\y_0^{\mu_m}\right)$, convergent on $O_x$, whose sum agrees with $h$ on $S_x\cap O_x$. (See \cite{krantz} for details on multivariate analytic functions.) This exploits the fact that $T_xM$ is a vector space, so no analytic manifold structure is needed on $M$ or $TM$. At any rate, analyticity of $f\colon S\rightarrow\mathbb{R}$ with respect to these would be much stronger.

These functions, with various regularity degrees, arise in our work as extensions of others, defined on open domains such as $\mathcal{T}$ or $\mathcal{T}_x$. (See Lemma \ref{continuous extension} for continuous or $C^m$ examples, or Theorem \ref{thm:analyticity main} for analytic examples.) Since our extensions are at least continuous and our sets $S$ or $S_x$ are contained, resp., in $\overline{\mathcal{T}}$ or $\overline{\mathcal{T}_x}$, the extensions are always unique. In this situation, when the extended object is $C^m$, it produces extensions of the $m$-th derivatives of the original.

For the arbitrary $S_x\subseteq T_xM\setminus\left\{0\right\}$, it makes sense to be conic ($\y\in S_x\Rightarrow \lambda\y\in S_x$ $\forall\lambda>0$). If it is, it makes sense for $h\colon S_x\rightarrow \mathbb{R}$ to be $0$-homogeneous ($h(\lambda\y)=h(\y)$ $\forall\lambda>0$). Moreover, one can define the \emph{positive projectivization} of $S_x$: the space $\mathbb{P}^+S_x$ whose points are the half-lines
\begin{equation}
    \mathbb{P}^+\y:=\left\{\lambda\y\in T_xM\colon\;\lambda>0\right\},\qquad \y\in S_x.
\end{equation}
For $\mathcal{T}=\underset{x\in M}{\bigcup}\mathcal{T}_x$, it is extensively studied in \cite{Hohmann:2021zbt} how $\mathbb{P}^+\mathcal{T}:=\underset{x\in M}{\bigcup}\mathbb{P}^+\mathcal{T}_x$ acquires the structure of a smooth manifold that is the setting of the Finslerian calculus of variations. But we are particularly interested in the case $S_x=\overline{\mathcal{T}_x}$, where we give a technical step to prove Lemma \ref{continuous extension}:
\begin{proposition} \label{properties proj}
    If $h\in C^0(\overline{\mathcal{T}_x})$ and it is $0$-hom., then $h$ attains absolute maximum and absolute minimum on $\overline{\mathcal{T}_x}$.
\end{proposition}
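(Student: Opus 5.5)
The plan is to reduce the statement to the compactness of a closed piece of a sphere, using $0$-homogeneity to pass from the non-compact cone $\overline{\mathcal{T}_x}$ to the set of its rays. We fix an auxiliary Euclidean norm $|\cdot|$ on $T_xM\cong\mathbb{R}^4$, for instance the coordinate norm, and consider the unit sphere $S^3=\{\y\in T_xM\colon |\y|=1\}$ together with $K:=\overline{\mathcal{T}_x}\cap S^3$. The continuous, surjective map $\y\mapsto \y/|\y|$ from $\overline{\mathcal{T}_x}$ onto $K$ realizes the positive projectivization $\mathbb{P}^+\overline{\mathcal{T}_x}$ as $K$. Since $h$ is $0$-homogeneous, $h(\y)=h(\y/|\y|)$ for every $\y\in\overline{\mathcal{T}_x}$, and therefore $h(\overline{\mathcal{T}_x})=h(K)$.

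The key step is to show that $K$ is compact. It is bounded, so by Heine--Borel it suffices to show that it is closed in $\mathbb{R}^4$. Here lies the only subtlety: $\overline{\mathcal{T}_x}$ is the closure taken in $T_xM\setminus\{0\}$, not in $T_xM$, so the origin might in principle be a limit point. This causes no trouble once we intersect with $S^3$. Suppose $\y_k\in K$ and $\y_k\to\y$ in $\mathbb{R}^4$. Then $|\y|=1$, so $\y\neq0$, and hence $\y$ lies in $T_xM\setminus\{0\}$, where $\overline{\mathcal{T}_x}$ is closed. Therefore $\y\in\overline{\mathcal{T}_x}\cap S^3=K$. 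We also note that $K$ is nonempty, because $\mathcal{T}_x\neq\emptyset$ by the standing assumption that the fibers are nonempty.

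With compactness in hand, $\left.h\right|_K$ is a continuous real function on a nonempty compact set. By the Weierstrass extreme value theorem it attains an absolute maximum at some $\y_+\in K$ and an absolute minimum at some $\y_-\in K$. Because $h(\overline{\mathcal{T}_x})=h(K)$, these values are also the absolute maximum and minimum of $h$ on all of $\overline{\mathcal{T}_x}$. Moreover, they are attained along the whole rays $\mathbb{P}^+\y_\pm$.

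I expect the only point needing care to be the closedness of $K$, that is, keeping track of which ambient space the closure is taken in and checking that the origin is excluded. Continuity of the normalization map and the identity $h(\overline{\mathcal{T}_x})=h(K)$ are routine consequences of $0$-homogeneity.
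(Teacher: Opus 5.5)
Your proof is correct and matches the paper's argument: restrict to the coordinate unit sphere, use $0$-homogeneity to transfer the values of $h$ to $\overline{\mathcal{T}_x}\cap\mathbb{S}$, get compactness from the closedness of $\overline{\mathcal{T}_x}$ in $T_xM\setminus\{0\}$, and conclude with the extreme value theorem. Your extra checks are details the paper leaves implicit: that the limit point cannot be the origin because $|\dot x|=1$, and that $K$ is nonempty because $\mathcal{T}_x\neq\emptyset$.
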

\begin{proof}
    Use some linear coordinates on $T_xM$ to define the sphere $\mathbb{S}:=\left\{\left(\y^0\right)^2+\ldots+\left(\y^3\right)^2=1\right\}$. Due to $0$-homogeneity, the values of $h$ are the same as those of $\left.h\right|_{\overline{\mathcal{T}_x}\cap\mathbb{S}}$, this restriction being continuous on its domain. But $\mathbb{S}$ is well known to be compact. Since $\overline{\mathcal{T}_x}$ is closed in $T_xM\setminus\left\{0\right\}$, so is $\overline{\mathcal{T}_x}\cap\mathbb{S}$ in $\mathbb{S}$: the domain of $\left.h\right|_{\overline{\mathcal{T}_x}\cap\mathbb{S}}$ also becomes compact. It is this that guarantees the existence of $\y_+,\y_{-}\in\overline{\mathcal{T}_x}\cap\mathbb{S}$ with $h(\y_+)=\underset{\overline{\mathcal{T}_x}\cap\mathbb{S}}{\max}\,h=\underset{\overline{\mathcal{T}_x}}{\max}\,h$ and $h(\y_{-})=\underset{\overline{\mathcal{T}_x}\cap\mathbb{S}}{\min}\,h=\underset{\overline{\mathcal{T}_x}}{\min}\,h$.
\end{proof}
\bibliography{nthPowerFinsler}

\end{document}